\documentclass[12pt]{article}

\usepackage{enumerate,amsmath,amsfonts,amsthm, amssymb, bm,slashed}

\newcommand{\R}{\mathbb{R}}

\usepackage{tensor}

\newtheorem{lemma}{Lemma}

\newtheorem*{remark}{Remark}

\usepackage{fullpage}

\parskip0.2cm     
\parindent0cm

\title{\Huge{Kinetic dynamics of neutral spin particles\\ in a spacetime with torsion}}

 \author
 {
       Simone Calogero  \\
       {\small Department of Mathematical Sciences}  \\
       {\small Chalmers University of Technology}  \\
       {\small Gothenburg, Sweden} \\
       }

\date{}

\begin{document}
\maketitle
\begin{abstract}
A kinetic model for the dynamics of collisionless spin neutral particles  in a spacetime with torsion is proposed. The fundamental matter field is the kinetic density $f(x,u,s)$ of particles with four-velocity $u$ and four-spin $s$.  The stress-energy tensor and the spin current of the particles distribution are defined as suitable integral moments of $f$ in the $(u,s)$ variables. By requiring compatibility with the contracted Bianchi identity in Einstein-Cartan theory, we derive a transport equation on the kinetic density $f$ that generalizes the well-known Vlasov equation for spinless particles. The total number of particles in the new model is not conserved. To restore this important property we assume the existence in spacetime of a second species of particles with the same mass and spin magnitude. The Vlasov equation on the kinetic density $\overline{f}$ of the new particles is derived by requiring that the sum of total numbers of particles of the two species should be conserved. 
\end{abstract}

\section{Introduction}
Matter in general relativity is represented by tensor fields on a four-dimensional manifold $M$---the spacetime. To measure distances, as well as fields strength, the manifold $M$ is equipped with a metric $g$. For consistency with special relativity, $g$ is assumed to have Lorentzian signature. To write down field equations for the metric $g$ and the matter fields one also needs a connection on $M$. In Einstein theory this is chosen to be the Levi-Civita connection, i.e., the unique torsion-free connection $\nabla$ such that $\nabla g =0$. With this choice the Ricci tensor $R_{ab}$ is symmetric, and therefore the Einstein equation
\begin{equation}\label{Einsteineq}
R_{ab}-\frac{1}{2}g_{ab}R=8\pi T_{ab}\quad (G=c=1)
\end{equation}
requires the stress-energy(-momentum) tensor $T_{ab}$ of the matter to be symmetric as well.

A few years after the publication of general relativity, Cartan proposed to modify the geometry of spacetime by removing the restriction that the connection should be torsion-free~\cite{Cartan1, Cartan2}. One consequence of this generalization is that the Ricci tensor is no longer necessarily symmetric, and so neither is the stress-energy tensor of the matter (assuming that the Einstein equations still hold in the form~\eqref{Einsteineq} when the connection has a torsion).

Shortly after Cartan's original work, 
Pauli~\cite{Pauli} proposed that the electron possessed an additional degree of freedom (besides position and velocity), which was later identified by  Goudsmit and Uhlenbech as a form of intrinsic angular momentum---the electron spin~\cite{GU}. In the following years several theories were put forward attempting to describe the electron spin in classical mechanics, some of which are presented in the review~\cite{Nyborg}. Cartan, among others, suggested that spin could act as a source of spacetime torsion, thereby establishing a link between microscopic physics and general relativity. This idea was 
largely forgotten once the concept of spin found a successful approach in quantum mechanics (and later quantum field theory), until the articles~\cite{Kibble, Sciama} and the popular review~\cite{HHK} contributed to its revival. Einstein-Cartan's theory is nowadays an active (but not prevailing) research topic in the physics community. 
As spin effects become important only at exceedingly small scales, and only in the interior of matter, the current experimental evidence of general relativity does not rule out Einstein-Cartan's theory as an alternative description of gravity~\cite{Hammond,Trautman}.   

The purpose of this paper is to lay down the foundations of a new kinetic model for spin particles in a spacetime with torsion. 
Neglecting spin, the fundamental matter field in (general relativistic) kinetic theory is a non-negative function $f=f(x,u)$ representing the number density of particles at the point $x\in M$ with four-velocity $u\in T_xM$~\cite{Hakan,Ehlers, Alan}. In this paper we consider an extended particle density $f=f(x,u,s)$, depending also on the particle spin four-vector $s\in T_xM$.  
The stress-energy tensor $T_{ab}(x)$ and the spin current $\tensor{S}{_a_b^c}(x)$ of the particles distribution are defined as suitable integral moments of $f$ in the $(u,s)$ variables. The stress-energy tensor appears as a source of curvature in the Einstein equation; the spin current $\tensor{S}{_a_b^c}$ generates a torsion $\tensor{C}{_a_b^c}$ in spacetime, which we assume to obey 
Cartan's equation
\[
\tensor{C}{_a_b^c}=8\pi(2\tensor{S}{_a_b^c}+\tensor{\delta}{_a^c}\tensor{S}{_b_d^d}-\tensor{\delta}{_b^c}\tensor{S}{_a_d^d}).
\]
As the spin current in our model  satisfies the Frenkel condition $\tensor{S}{_a_b^b}=0$, Cartan's equation simplifies to
\begin{equation}\label{cartaneq}
\tensor{C}{_a_b^c}=16\pi\tensor{S}{_a_b^c}.
\end{equation}
The metric connection with torsion $\tensor{C}{_a_b^c}$ is denoted by $\widehat{\nabla}$.
The Einstein equation for the spacetime metric is, under Frenkel's condition,
\begin{equation}\label{newEeq} 
\widehat{R}_{ab}-\frac{1}{2}g_{ab}\widehat{R}=8\pi \Sigma_{ab},\quad \Sigma_{ab}=T_{ab}-\widehat{\nabla}_c(\tensor{S}{_a_b^c}+2\tensor{S}{^c_{(ab)}}),
\end{equation}
where 
$\widehat{R}_{ab}$ is the Ricci tensor of the connection $\widehat{\nabla}$ and $\widehat{R}=\tensor{\widehat{R}}{^a_a}$. 
The system~\eqref{cartaneq}-\eqref{newEeq} is a special case of the system of field equations derived in~\cite{Kibble, Sciama} by the variational principle; for this reason,~\eqref{cartaneq}-\eqref{newEeq} are also known as Kibble-Sciama equations in the physics literature.

The main difficulty in developing kinetic matter models in Einstein-Cartan's theory is to find an evolution equation on the particle density $f$ that is consistent with the Bianchi identity 
for the connection $\widehat{\nabla}$, which is~\cite{Penrose}
\begin{equation}\label{BianchiID}
\widehat{\nabla}_b\left(\tensor{\widehat{R}}{^a^b}-\frac{1}{2}g^{ab}\widehat{R}\right)=-\tensor{C}{^a^b^c}\tensor{\widehat{R}}{_c_b}+\frac{1}{2}\tensor{C}{^b^c^d}\tensor{\widehat{R}}{^a_d_b_c}.
\end{equation} 
Combining~\eqref{cartaneq}--\eqref{BianchiID} we obtain the conservation law of energy-momentum in the form
\begin{equation}\label{consThat}
\widehat{\nabla}_b\Sigma^{ab}=-2\tensor{S}{^a^b^c}\tensor{\widehat{R}}{_c_b}+\tensor{S}{^b^c^d}\tensor{\widehat{R}}{^a_d_b_c}.
\end{equation}
As opposed to other matter models (e.g., spin fluids~\cite{Medina}), in kinetic theory the constraint~\eqref{consThat} does not automatically entail evolution equations on the matter fields; on the contrary, it complicates the task of deriving admissible kinetic models. In this paper we introduce an evolution equation on the kinetic density $f(x,u,s)$ of spin particles which is compatible with~\eqref{consThat} and which generalizes the well-known Vlasov model for spinless particles~\cite{Hakan, Alan}. The overall interpretation of Vlasov equations is that the kinetic density $f$ remains constant along the trajectories of the individual particles; in particular, Vlasov models neglect collisions among the particles.

A questionable feature of our model is that, when applied to a single species of particles, it violates the conservation law of the total number of particles. To restore this important property, we assume the existence in spacetime of a second species of particles with the same mass and spin magnitude of the first species; the evolution equation for the kinetic density $\overline{f}(x,u,s)$  of these new particles is obtained by imposing that the total particles number current computed with the kinetic density $f+\overline{f}$ should be divergence-free. 

The rest of the paper is organized as follows.
In Section~\ref{EVsec} we recall some facts on the classical kinetic theory for spinless particles and the Einstein-Vlasov system. 
In Section~\ref{ECsec} we review the results on Einstein-Cartan's theory used in this paper, including the derivation of the system~\eqref{cartaneq}-\eqref{newEeq}. In Section~\ref{spinkinsec} we introduce the definition of kinetic density for particles with spin and show how to construct the tensors $T_{ab}$,  $\tensor{S}{_a_b^c}$ from it. Section~\ref{VlasovSec} is dedicated to the special case of collisionless spin particles. In Section~\ref{Vlasovant} we extend our model to the case when two species of particles with the same mass and spin magnitude are present. In Section~\ref{conclu} we summarize our results and comment further on their physical meaning.

In this paper we assume that particles are neutral.
Charged spin particles will of course also generate a torsion in spacetime, but the dominant effect of spin in the charged case is to induce a magnetic moment on the particles. The generalization of the model in this paper to the case of charged particles will be discussed elsewhere.  

{\bf Notation.} We shall often (but not always) use the abstract index notation to denote tensors and tensor operations~\cite{Wald}. The Latin letters $a,b,c,d,e,h$ denote abstract indexes. Greek letters denote component indexes and run from 0 to 3; the spatial component indexes are denoted by $i,j,k,\dots$ and run from 1 to 3. Single indexes within brackets, as in $e_{(\mu)}$, label the vectors of a basis and are not component indexes.  
The metric has signature $(-,+,+,+)$ and physical units are chosen so that $
G=c=\hbar=1$.


\section{Kinetic theory for spinless particles}\label{EVsec}
Let $(M,g)$ be a spacetime---i.e., a four-dimensional time-oriented Lorentzian manifold. For the moment we do not make any specific choice for a connection on $M$. Consider a matter distribution that consists of a large number of identical point particles with rest mass $m> 0$. In this section we assume that the state of each particle is determined by the spacetime position $x\in M$ and the four-velocity $u\in T_x M$ of the particle; in particular, the particles spin is neglected. The particles four-velocity is constrained by the condition $g_{ab}(x)u^au^b=-1$, which has to be satisfied at every point $x\in M$. The state space of each particle is therefore the seven-dimensional submanifold of the tangent bundle given by
\[
\Pi=\cup_{x\in M}\Pi[x],\quad \Pi[x] = \{u\in T_xM:g_{ab}(x)u^au^b=-1,\ u\ \text{future directed}\}.
\]
In kinetic theory the state of the matter distribution is described by a function 
\[
f:\Pi\to [0,\infty)
\] 
giving the number density of particles in state space.
\begin{remark}
\textnormal{The kinetic density $f$ is more commonly defined in the literature as a function of $(x,p)$, where $p=mu$ is the four-momentum of the particles; see~\cite{Hakan,Ehlers, Alan}. Using $(x,u)$ as independent variables has several advantages in this paper; e.g., it results in the kinetic density having the same physical dimension for particles with and without spin (in our units) and 
avoids the appearance of several constants $m$ in the equations.}
\end{remark} 
Let $d\pi(x)$ denote the volume element induced on $\Pi[x]$ by the metric volume form on $T_xM$ (where we regard $T_xM$ as a flat manifold with metric $g_{ab}(x)$ and $\Pi[x]$ as a submanifold of $T_xM$). All matter fields in spacetime are obtained by integrating momentum fields defined on $\Pi[x]$ (microscopic fields) with respect to the measure $f(x,u)\,d\pi(x)$.  For instance, the timelike vector field on $M$ given by
\begin{equation}\label{N}
(N_f)^a(x)=\int_{\Pi[x]} u^a f \,d\pi(x)
\end{equation}
represents the particles number current in spacetime. The stress-energy tensor (i.e., the momentum current) of the particles distribution is 
\begin{equation}\label{TVlasov}
(T_f)^{ab}(x)=m\int_{\Pi[x]} u^a u^bf\,d\pi(x).
\end{equation}
$T_f$ is symmetric and satisfies the strong and dominant energy conditions~\cite{Ehlers}. The Einstein equation in units $G=c=1$  reads
\begin{equation}\label{Einsteineq2}
R_{ab}-\frac{1}{2}R\,g_{ab}=8\pi (T_f)_{ab}.
\end{equation}
The Ricci tensor $R_{ab}$ depends on the connection and is not, in general, symmetric. However, as the stress-energy tensor is symmetric for kinetic matter, so must be $R_{ab}$ in the left hand side of~\eqref{Einsteineq2}. The latter holds in particular when the connection is the Levi-Civita one, which we assume to be the case in the rest of this section.

The evolution equation satisfied by the kinetic particle density $f$ depends on the type of interaction between the particles. In general it takes the form $L(f)=0$, where $L$ is an operator acting on the state space $\Pi$. If collisions among the particles are neglected, the operator $L$ is the geodesics spray and the resulting equation on $f$ is called Vlasov (or collisionless Boltzmann) equation.\footnote{The name ``Vlasov equation'' referred originally to the plasma physics version of the model~\cite{Vlasov}, but it has by now become common to adopt the same name for the model applied to gravitational systems.}  
To derive an explicit form for the Vlasov equation, we introduce a local system of coordinates $x^\mu$ in a neighborhood of $x\in M$ and an orthonormal basis $e_{(\mu)}$ of $T_xM$ such that $e_{(0)}$ is timelike and future pointing.
Let $u^\mu$ denote the components of $u\in T_xM$ in the basis $e_{(\mu)}$. We use $(x^\mu, u^\nu)$ as local coordinates on the tangent bundle and $(X_{(\mu)},U_{(\nu)})$ as basis for the tangent space of the bundle, where  $(X_{(\mu)},U_{(\nu)})\psi=(\partial_{x^\mu}\psi, \partial_{u^\nu}\psi)$ for all smooth functions $\psi:TM\to\R$. The geodesic spray in these coordinates is the vector field on the tangent bundle given by
\[
L=u^\nu e_{(\nu)}^\mu X_{(\mu)}+\tensor{\gamma}{_\alpha_\beta^\mu}u^\alpha u^\beta U_{(\mu)},
\]
where $e_{(\nu)}^\mu=e_{(\nu)}(x^\mu)$ are the components of $e_{(\nu)}$ in the basis $X_{(\mu)}$ and $\tensor{\gamma}{_\alpha_\beta^\mu}$ are the Ricci rotation coefficients of the frame $e_{(\mu)}$; that is,
\begin{equation}\label{riccicoeff}
\tensor{\gamma}{_\alpha_\beta^\mu}= e_{(\alpha)}^a e_{(\beta)}^b\nabla_ae^{(\mu)}_b,
\end{equation}
where $e^{(\mu)}_a$ is the co-frame dual to $e_{(\mu)}^a$. 
The state space conditions in the frame $e_{(\mu)}$ become
\[
\eta_{\mu\nu}u^\mu u^\nu=-1,\quad u^0>0,
\]
and thus entail 
\begin{equation}\label{p0}
u^0=\sqrt{1+|{\bm u}|^2}=-u_0,
\end{equation}  
where
\[
{\bm u}=(u^1, u^2, u^3)=(u_1,u_2,u_3),\quad |{\bm u}|^2=(u^1)^2+(u^2)^2+(u^3)^2.
\]
The kinetic particle density can be written as a function of $(x,{\bm u})$, which we denote by $f_*$; that is, 
\[
f_*(x,{\bm u})=f(x,\sqrt{1+|{\bm u}|^2},{\bm u}).
\]
Using $\gamma_{\alpha\beta\mu}=-\gamma_{\alpha\mu\beta}$, we find that the Vlasov equation $L(f_*)=0$ is 
\begin{equation}\label{XVlasov}
u^\mu e_{(\mu)}^\nu\partial_{x^\nu}f_*+\tensor{\gamma}{_\mu_\nu^i}u^\mu u^\nu \partial_{u^i}f_* =0,
\end{equation}
where it is understood that $u^0$ is given by~\eqref{p0}. 
The invariant volume element $d\pi(x)$ on $\Pi[x]$ in the coordinates $(u^1,u^2,u^3)$ is given by $d\pi(x)=\sqrt{|\det \mathfrak{h}|}\,d{\bm u}$, where $d{\bm u} = du^1\wedge du^2\wedge du^3$ and $\mathfrak{h}$ is the (Riemannian) metric induced by $\eta_{\mu\nu}$ on the hyperboloid $u^0=\sqrt{1+|{\bm u}|^2}$, i.e., 
\begin{equation}\label{hypmet}
\mathfrak{h}_{ij}=\delta_{ij}-\frac{u_iu_j}{(u^0)^2}.
\end{equation}
It follows that
\[
d\pi(x) = \sqrt{|\det\mathfrak h|}d{\bm u}=\frac{d{\bm u}}{u^0},
\]
and so the tensor fields~\eqref{N}-\eqref{TVlasov} take the form
\begin{subequations}\label{NTnew}
\begin{equation}
(N_f)^a=(N_f)^\mu e_{(\mu)}^a,\quad (T_f)^{ab}=(T_f)^{\mu\nu}e_{(\mu)}^a e_{(\nu)}^b,
\end{equation}
where
\begin{equation}
(N_f)^\mu(x) = \int_{\R^3} f_*(x,{\bm u}) u^\mu \frac{d{\bm u}}{u^0},\quad (T_f)^{\mu\nu}(x) =m\int_{\R^3} f_*(x,{\bm u}) u^\mu u^\nu \frac{d{\bm u}}{u^0}.
\end{equation}
\end{subequations}
The Vlasov equation~\eqref{XVlasov} implies that these fields are divergence-free:
\begin{equation}\label{consLaws}
\nabla_a (N_f)^a=0,\quad \nabla_b (T_f)^{ab}=0,
\end{equation}
where, for every vector field $V^a$ and tensor field $P^{ab}$, the divergence in the orthonormal frame $e_{(\mu)}$ is computed by the formulas
\begin{subequations}\label{Divergence}
\begin{align}
&\nabla_a V^a=e_{(\mu)}^\alpha\partial_{x^\alpha}V^\mu+\tensor{\gamma}{^\alpha_\alpha_\mu} V^\mu,\\ 
&\nabla_b P^{ab}=(e_{(\nu)}^\alpha\partial_{x^\alpha}P^{\mu\nu}+\tensor{\gamma}{_\alpha^\mu_\beta}P^{\beta\alpha}+\tensor{\gamma}{^\alpha_\alpha_\beta}P^{\mu\beta})e_{(\mu)}^a. 
\end{align}
\end{subequations}
\begin{remark}
\textnormal{With our definition~\eqref{riccicoeff} of the Ricci rotation coefficients, which is the same as in~\cite{Wald}, the formulas for the covariant derivative of tensor fields in an orthonormal frame are obtained from those in a holonomic basis by the formal substitutions $\partial_{x^\mu}\to e_{(\mu)}^\alpha\partial_{x^\alpha}$ and $\tensor{\Gamma}{^\mu_\alpha_\beta}\to \tensor{\gamma}{_\alpha^\mu_\beta}$, where $\tensor{\Gamma}{^\mu_\alpha_\beta}$ are the Christoffel symbols. For instance, for all vector fields $V$ we have $\tensor{\nabla_\mu V}{^\nu}=\partial_{x^\mu}V^\nu+\tensor{\Gamma}{^\nu_\mu_\alpha}V^\alpha$ in the coordinates basis $X_{(\mu)}$ and  $\tensor{\nabla_\mu V}{^\nu}=e_{(\mu)}^\alpha\partial_{x^\alpha}{V}^\nu+\tensor{\gamma}{_\mu^\nu_\alpha}V^\alpha$ in the orthonormal basis $e_{(\mu)}$.}
\end{remark}
As for all matter models in general relativity, the equation $\nabla_b (T_f)^{ab}=0$ represents the conservation law of energy-momentum in differential form; in light of the contracted Bianchi identity $\nabla_a(R^{ab}-g^{ab}R/2)=0$,
it must be satisfied in order for the Einstein equation~\eqref{Einsteineq2} to admit solutions. Similarly, the equation $\nabla_a (N_f)^a=0$ is the conservation law of the particles number in differential form. Since for particles with charge $q$ the electric four-current is given by $q(N_f)^a$, then the identity $\nabla_a (N_f)^a=0$ in the Einstein-Vlasov model is equivalent to the conservation of charge, and it is therefore required if one wants to add the Maxwell equations to the system; see~\cite{EVM} for an introduction to the Einstein-Vlasov-Maxwell system. 

In the presence of an additional species of particles with kinetic density $\overline{f}$ satisfying the Vlasov equation, the total stress-energy tensor and the total number current are given respectively by
\[
T_f+T_{\overline{f}}=T_{f+\overline{f}},\quad N_f+N_{\overline{f}}=N_{f+\overline{f}}
\]
and they are divergence-free as a consequence of each of the tensors $T_f,T_{\overline{f}},N_{f},N_{\overline{f}}$ being divergence-free.


\section{Introduction to Einstein-Cartan's theory}\label{ECsec}
\subsection{Manifolds with torsion}\label{torsec}
The purpose of this section is to recall a few important definitions and formulas in Riemann-Cartan's geometry. More details can be found e.g. in~\cite{HHK, Medina, Schouten, Trautman}. 

Let $\widehat{\nabla}$ be a metric connection on the manifold $M$.
The torsion $\tensor{C}{_a_b^c}$ of $\widehat{\nabla}$ is defined by requiring
\[
\tensor{C}{_a_b^c}X^aY^b=(\widehat{\nabla}_X Y)^c-(\widehat{\nabla}_Y X)^c-[X,Y]^c
\]
for all smooth vector fields $X,Y$. By expressing the commutator $[X,Y]$ in terms of the Levi-Civita connection $\nabla$ as
\begin{equation}\label{commutator}
[X,Y]=(\nabla_XY)-(\nabla_YX),
\end{equation}
we can rewrite the definition of torsion as
\[
\tensor{C}{_a_b^c}=\tensor{K}{_a_b^c}-\tensor{K}{_b_a^c},
\]
where  the contorsion $\tensor{K}{_a_b^c}$ of $\widehat{\nabla}$ is defined by
\[
\tensor{K}{_a_b^c}X^aY^b=(\widehat{\nabla}_XY)^c-(\nabla_XY)^c.
\]
The torsion and contorsion satisfy the properties
\begin{equation}\label{propC}
(i)\ \tensor{C}{_a_b^c}=-\tensor{C}{_b_a^c},\ (ii)\ K_{abc}=-K_{acb},\
(iii)\ \tensor{K}{_a_b^c}=\frac{1}{2}\tensor{C}{_a_b^c}+\tensor{C}{^c_{(ab)}}.
\end{equation}
Property $(i)$ is obvious, while $(ii)$ and $(iii)$ follow by the metric compatibility condition of $\widehat{\nabla}$ and $\nabla$. By $(iii)$ we obtain the additional identities 
\[
\tensor{K}{_a_b^a}=\tensor{C}{_a_b^a},\quad \tensor{K}{_{[a}_{b]}^c}=\frac{1}{2}\tensor{C}{_a_b^c},\quad \tensor{K}{_{(a}_{b)}^c}=\tensor{C}{^c_{(a}_{b)}}.
\]
The Ricci rotation coefficients of any orthonormal basis $e^a_{(\mu)}$ in the connections $\widehat{\nabla},\nabla$ are related by
\[
\tensor{\widehat{\gamma}}{_\mu_\nu^\alpha}=\tensor{\gamma}{_\mu_\nu^\alpha}-\tensor{K}{_\mu_\nu^\alpha},
\]
hence~\eqref{Divergence} imply the following identities:
\begin{subequations}\label{Divergence2}
\begin{align}
&\widehat{\nabla}_a V^a =\nabla_a V^a+\tensor{K}{_a_b^a}V^b,\\
&\widehat{\nabla}_b P^{ab}=\nabla_b P^{ab}+\tensor{K}{_b_c^a}P^{cb}+\tensor{K}{_b_ c^ b}P^{ac},
\end{align}
\end{subequations}
for all tensor fields $V^a$, $P^{ab}$. More generally, there holds
\begin{equation}\label{generalCov}
\widehat{\nabla}_a\tensor{X}{^{b_1\cdots b_k}_{c_1\dots c_l}}=\nabla_a\tensor{X}{^{b_1\cdots b_k}_{c_1\dots c_l}}+\sum_j \tensor{K}{_a_d^{b_j}}\tensor{X}{^{b_1\cdots b_{j-1}d\cdots b_k}}_{c_1\dots c_l}-\sum_j \tensor{K}{_a_{c_j}^d}\tensor{X}{^{b_1\dots b_k}_{c_1\cdots c_{j-1}d\cdots c_l}}.
\end{equation}
Let $\tensor{\widehat{R}}{_a_b_c^d}$ denote the Riemann tensor of the connection $\widehat{\nabla}$, that is
\[
\tensor{\widehat{R}}{_a_b_c^d}X^aY^bZ^c=(\widehat{\nabla}_X(\widehat{\nabla}_Y Z))^d-(\widehat{\nabla}_Y(\widehat{\nabla}_X Z))^d- (\widehat{\nabla}_{[X,Y]} Z)^d
\]
for all smooth vector fields $X,Y,Z$.
Using $(\widehat{\nabla}_XY)^c=(\nabla_XY)^c+\tensor{K}{_a_b^c}X^aY^b$, Leibniz's rule and~\eqref{commutator} we find 
\[
\tensor{\widehat{R}}{_a_b_c^d}=\tensor{R}{_a_b_c^d}+\widehat{\nabla}_a\tensor{K}{_b_c^d}-\widehat{\nabla}_b \tensor{K}{_a_c^d}+\tensor{K}{_e_c^d}(\tensor{K}{_a_b^e}-\tensor{K}{_b_a^e})+\tensor{K}{_a_c^e}\tensor{K}{_b_e^d}-\tensor{K}{_b_c^e}\tensor{K}{_a_e^d},
\]
where $\tensor{R}{_a_b_c^d}$ is the Riemann tensor of the Levi-Civita connection $\nabla$. 
From here we see that $\tensor{\widehat{R}}{_a_b_c^d}$ has the following symmetry properties in common with $\tensor{R}{_a_b_c^d}$:
\begin{equation}\label{symmetriesRbar}
\tensor{\widehat{R}}{_a_b_c^d}=-\tensor{\widehat{R}}{_b_a_c^d},\quad \tensor{\widehat{R}}{_a_b_c_d}=-\tensor{\widehat{R}}{_a_b_d_c}.
\end{equation}
However, the remaining fundamental symmetries of $\tensor{R}{_a_b_c^d}$, namely
\[
\tensor{R}{_a_b_c^d}+\tensor{R}{_c_a_b^d}+\tensor{R}{_b_c_a^d}=0, \quad
\tensor{R}{_a_b_c_d}=\tensor{R}{_c_d_a_b},
\]
are in general no longer satisfied by $\tensor{\widehat{R}}{_a_b_c^d}$.  The symmetries~\eqref{symmetriesRbar} imply that there exists only one non-zero independent contraction of $\tensor{\widehat{R}}{_a_b_c^d}$. Following the convention in~\cite{Wald}, we define the Ricci tensor by contracting the second and fourth index of the Riemann tensor, that is,
\begin{equation}\label{riccitorsion}
\widehat{R}_{ab}=\tensor{\widehat{R}}{_a_c_b^c}=R_{ab}+\widehat{\nabla}_a\tensor{K}{_c_b^c}-\widehat{\nabla}_c
\tensor{K}{_a_b^c}+\tensor{K}{_a_b^d}\tensor{K}{_c_d^c}-\tensor{K}{_c_a^d}\tensor{K}{_d_b^c},
\end{equation}
where $R_{ab}$ is the Ricci tensor of the Levi-Civita connection $\nabla$.\footnote{In~\cite{HHK}, the Ricci tensor is defined as $\widehat{R}_{ab}=\tensor{\widehat{R}}{_c_a_b^c}$ and thus differs from ours by a sign.}
Moreover, taking the trace of~\eqref{riccitorsion} we find
\begin{equation}\label{ricciscalar}
\widehat{R}=R+2\widehat{\nabla}_a \tensor{K}{_b^a^b}+\tensor{K}{_a^a^c}\tensor{K}{_b_c^b}+\tensor{K}{_c^a^b}\tensor{K}{_a_b^c}.
\end{equation}

The last property of manifolds with torsion that we need is the (contracted) Bianchi identity satisfied by the Einstein tensor $
 \widehat{E}_{ab}=\widehat{R}_{ab}-g_{ab}\widehat{R}/2$; 
 recall that $\nabla_b \tensor{E}{^a^b}=0$ holds in the Levi-Civita connection. To derive the analogous equation for $\widehat{E}_{ab}$ we start from the differential Bianchi identity for the Riemann tensor $\tensor{\widehat{R}}{_a_b_c^d}$, which reads
\[
\widehat{\nabla}_{[a} \tensor{\widehat{R}}{_b_{c]}_d^e}+\tensor{C}{_{[a}_b^h}\tensor{\widehat{R}}{_{c]}_h_d^e}=0,
\]
see~\cite[Eq.~(4.2.43)]{Penrose}. Contracting $a$ with $e$ and $b$ with $d$ we arrive to the following identity:
\begin{equation}\label{bianchitorsion}
\widehat{\nabla}_b\tensor{\widehat{E}}{^a^b}=-\tensor{C}{^a^b^c}\tensor{\widehat{R}}{_c_b}+\frac{1}{2}\tensor{C}{^b^c^d}\tensor{\widehat{R}}{^a_d_b_c}.
\end{equation}
An alternative way to obtain~\eqref{bianchitorsion} is by using~\eqref{riccitorsion}-\eqref{ricciscalar} to write $\widehat{E}_{ab}$ in terms of $E_{ab}$, and then applying~\eqref{propC},~\eqref{generalCov}, as well as $\nabla_aE^{ab}=0$, in the resulting expression. We emphasize, in particular, that the identities~\eqref{bianchitorsion} and $\nabla_aE^{ab}=0$ are equivalent.

\subsection{Einstein-Cartan equations}\label{lageq}
In this section we review the argument in~\cite{HHK, Kibble, Sciama}  to derive the field equations~\eqref{cartaneq}-\eqref{newEeq} in Einstein-Cartan's theory.
The starting point is the Lagrangian density
\[
\mathcal{L}(g,K,\Psi)=\frac{1}{16\pi}\mathcal{L}_G(g,K)+\mathcal{L}_M(g,\Psi,\widehat{\nabla}\Psi),
\]
where
\[
\mathcal{L}_G(g,K)=\widehat{R}\sqrt{|g|}\quad (|g|=-\det g)
\]
is the gravitational term, and $\mathcal{L}_M(g,\Psi,\widehat{\nabla}\Psi)$ is the matter term, depending on the matter field $\Psi$ and its covariant derivative $\widehat{\nabla}\Psi$ (and thus on the contorsion $K$). As shown in~\cite[Appendix]{HHK}, the variation of $\mathcal{L}_G$ with the respect to the metric is, up to a divergence term,\footnote{We have a different sign in~\eqref{varg} than in~\cite{HHK} due to our convention on the definition of the Ricci tensor.}
\begin{equation}\label{varg}
\frac{1}{\sqrt{|g|}}\frac{\delta\mathcal{L}_G}{\delta g_{ab}}=-\widehat{E}^{ab}-\nabla^*_c\left(\tensor{P}{^a^b^c}-\tensor{P}{^b^c^a}+\tensor{P}{^c^a^b}\right),
\end{equation}
where the operator $\nabla^*$ and the tensor $\tensor{P}{_a_b^c}$ are defined by
\[
\nabla^*_a=\widehat{\nabla}_a+\tensor{C}{_a_b^b},\quad  \tensor{P}{_a_b^c}=\frac{1}{2}\left(\tensor{C}{_a_b^c}+\tensor{\delta}{_a^c}\,\tensor{C}{_b_d^d}-\tensor{\delta}{_b^c}\,\tensor{C}{_a_d^d}\right).
\]
Hence, the Lagrangian field equation $\delta\mathcal{L}/\delta g_{ab}=0$ reads
\begin{equation}\label{Eeq}
\widehat{E}^{ab}+\nabla^*_c\left(\tensor{P}{^a^b^c}-\tensor{P}{^b^c^a}+\tensor{P}{^c^a^b}\right)=8\pi T^{ab},\quad T^{ab}=\frac{2}{\sqrt{|g|}}\frac{\delta\mathcal{L}_M}{\delta g_{ab}}.
\end{equation}
Similarly, it is found that
\[
\frac{1}{\sqrt{|g|}}\frac{\delta\mathcal{L}_G}{\delta \tensor{K}{_a_b^c}}=-2\tensor{P}{_c^b^a},
\]
and so the Lagrangian field equation $\delta\mathcal{L}/\delta  \tensor{K}{_a_b^c}=0$ is
\begin{equation}\label{Ceq}
\tensor{P}{_c^b^a}=8\pi \tensor{S}{_c^b^a},\quad \tensor{S}{_c^b^a}=\frac{1}{\sqrt{|g|}}\frac{\delta\mathcal{L}_M}{\delta \tensor{K}{_a_b^c}}.
\end{equation}
The tensor $\tensor{S}{_a_b^c}$ is the spin current. As this suffices for the applications to kinetic theory discussed later, we continue this section assuming that the spin current satisfies the Frenkel condition 
\begin{equation}\label{frenkelcond}
\tensor{S}{_a_b^a}=0.
\end{equation}
Assuming~\eqref{frenkelcond} we obtain 
\[
\tensor{P}{_a_b^a}=0,\quad\tensor{P}{_a_b^c}=\frac{1}{2}\tensor{C}{_a_b^c},\quad
\nabla^*_c\left(\tensor{P}{^a^b^c}-\tensor{P}{^b^c^a}+\tensor{P}{^c^a^b}\right)=\widehat{\nabla}_c \left(\frac{1}{2}C^{abc}+C^{c(ab)}\right),
\]
and therefore~\eqref{Eeq} and~\eqref{Ceq} simplify to
\begin{align}
&\widehat{E}_{ab}=8\pi \Sigma_{ab},\quad \Sigma_{ab}=T_{ab}-\widehat{\nabla}_c \left(\tensor{S}{_a_b^c}+2\tensor{S}{^c_{(ab)}}\right),\label{primofield}\\
&\tensor{C}{_a_b^c}=16\pi \tensor{S}{_a_b^c}.\label{cart}
\end{align}
Using the Bianchi identity~\eqref{bianchitorsion} in~\eqref{primofield}-\eqref{cart} we find
\begin{equation}\label{conservationSigma}
\widehat{\nabla}_b\Sigma^{ab}=-2\tensor{S}{^a^b^c}\tensor{\widehat{R}}{_c_b}+\tensor{S}{^b^c^d}\tensor{\widehat{R}}{^a_d_b_c}.
\end{equation}
The identitiy~\eqref{conservationSigma} is the local conservation law of energy-momentum in Einstein-Cartan's theory.  

\subsection{General relativistic form of the Einstein-Cartan equations}
Let
\begin{subequations}\label{Eeqtor}
\begin{equation}\label{H}
H_{ab}
=8\pi\left[\left(\tensor{S}{_c_a^d}+2\tensor{S}{^d_{(ca)}}\right)\left(\tensor{S}{_d_b^c}+2\tensor{S}{^c_{(db)}}\right)+\frac{1}{2}g_{ab}\left(S_{cde}+2S_{e(cd)}\right)\left(S^{dec}+2S^{c(de)}\right)\right].
\end{equation}
By Cartan's equation~\eqref{cart}, the identity $(iii)$ in~\eqref{propC} and the formula~\eqref{riccitorsion} for $\widehat{R}_{ab}$ (with $\tensor{K}{_a_b^a}=\tensor{C}{_a_b^a}=0$), we can rewrite~\eqref{primofield} in the general relativistic form
\begin{equation}\label{EinsteinWithH}
R_{ab}-\frac{1}{2}g_{ab}R=8\pi (T_{ab}+ H_{ab}),
\end{equation}
\end{subequations}
in which the spin current appears as an external matter field with stress-energy tensor $H_{ab}$. By~\eqref{Eeqtor}  and the contracted Bianchi identity $\nabla_a(R^{ab}-g^{ab}R/2)=0$, the conservation law~\eqref{conservationSigma} is equivalent to
\begin{equation}\label{conservationT}
\nabla_b T^{ab}=J^a,\quad J^a:=-\nabla_b H^{ab}.
\end{equation}
For our purpose, it is preferable to use the conservation law of energy-momentum in the form~\eqref{conservationT}, rather than~\eqref{conservationSigma}, and therefore in the following we shall write the Einstein equation for the metric in the form~\eqref{Eeqtor} instead of~\eqref{primofield}. We emphasize, however, that it is the original geometric interpretation of $\tensor{S}{_a_b^c}$ as the spacetime torsion (up to a constant) that justifies the definition of the tensor $H_{ab}$; moreover, $H_{ab}$ is not an actual stress-energy tensor and so there is no physical reason to require that it should satisfy the energy conditions commonly imposed on $T_{ab}$ in general relativity.
\begin{remark}
\textnormal{The tensor $H_{ab}$ simplifies if the spin current, or equivalently the torsion tensor, is assumed to satisfy further algebraic properties (besides the Frenkel condition). For instance, a rather common case study is that of a totally antisymmetric torsion~\cite{DR, Fabbri}; that is, $C_{a(bc)}=0$, and thus $S_{a(bc)}=0$. In this case, $S$ is the dual of a vector field $V$, i.e., $S_{abc}=\varepsilon_{abcd}V^d$, and the tensor $H_{ab}$ simplifies to 
\[
H_{ab}=
-16\pi\left(V_aV_b+\frac{1}{2}g_{ab}V^cV_c\right). 
\]}
\end{remark}


\section{Kinetic theory of spin particles}\label{spinkinsec}
In the rest of the paper we shall complete the Einstein-Cartan theory described in the previous section by constructing the matter model
within the formalism of kinetic theory for spin particles. The fundamental matter field in this theory is the kinetic density $f(x,u,s)$ of particles with four-velocity $u$ and four-spin $s$. 
In contrast to the Lagrangian approach presented in Section~\ref{lageq}, neither the matter field equations nor the stress-energy tensor $T_{ab}$ will be derived by a variational principle. Instead we rely on the relation between kinetic theory and (relativistic) particles mechanics to justify the definitions of the tensor fields $T_{ab}$, $\tensor{S}{_a_b^c}$. Subsequently, we derive an evolution equation on the kinetic density $f$ for collisionless particles with spin by an argument similar to the one presented in Section~\ref{EVsec} and requiring compatibility with the conservation law~\eqref{conservationT}. Finally, in Section~\ref{Vlasovant} we introduce a similar model in the presence of two species of particles with the same mass and spin magnitude by imposing that the their total number should be conserved.  
\subsection{Kinetic density of spin particles}\label{kineticEC}
There exist two common representations for the spin of a particle in relativistic mechanics: as a spacelike four-vector $s^a$ satisfying the constraints $s^a u_a=0$ and $
s^a s_a = \sigma^2$, or as a skew-symmetric tensor $\phi^{ab}$ satisfying $\phi_{ab}u^b=0$ and $\phi_{ab}\phi^{ab}=\sigma^2/2$, where $\sigma$ is a given positive constant. The first representation is due to Thomas~\cite{Thomas}, while the second one was introduced, independently and almost at the same time, by Frenkel~\cite{Frenkel}. The two representations are equivalent, as the variables $s^a$, $\phi^{ab}$ are related by the identities
\[
\phi_{ab}=(s_{[a}u_{b]})^\star,\quad s_a=2u^b(\phi_{ab})^\star, 
\]
where $y^\star$ denotes the dual of $y$.
In this paper we shall employ both representations of the spin variable. Specifically, the Thomas four-vector $s^a$ will appear as independent variable in the kinetic particle density, while the Frenkel tensor $\phi_{ab}$ will be used for the definition of spin current.

To formalize the definition of kinetic particle density we introduce the vector bundle 
\[
Q=\cup_{x\in M}(T_xM)^2,
\] 
whose elements we henceforth identify with the triples $(x, u, s)$, where $x\in M$ and $u,s\in T_xM$. The vector bundle $Q$ is also a smooth, twelve-dimensional manifold. The state space of spin particles is the nine-dimensional submanifold of $Q$ given by
\begin{align*}
&\Pi_{\sigma}=\cup_{x\in M}\Pi_{\sigma}[x]\subset Q,\\
&\Pi_{\sigma}[x]=\{(u,s)\in (T_xM)^2 : g_{ab}u^a u^b = -1,\, g_{ab}s^a s^b = \sigma^2,\, g_{ab}s^a u^b =0,\, u\, \text{future directed}\}.
\end{align*}
The kinetic particle density in Thomas representation is therefore a function
\begin{equation}\label{pdens}
f:\Pi_{\sigma}\to [0,\infty).
\end{equation}
To make this construction more explicit we shall now introduce a specific set of coordinates on the state space $\Pi_{\sigma}$. Let $e_{(\mu)}$ be an orthonormal frame on the tangent bundle, with $e_{(0)}$ being timelike. Let $x^\mu$ be a local system of coordinates on $M$ and $u^\mu$, $s^\mu$ denote the components in the frame $e_{(\mu)}$ of the four-vectors $u, s$; $(x^\alpha,u^\mu, s^\nu)$ are local coordinates on $Q$.
We write 
\[
{\bm s}=(s^1,s^2,s^3)=(s_1,s_2,s_3),\quad {\bm u}=(u^1,u^2,u^3)=(u_1,u_2,u_3),
\] 
and follow the notation introduced in Section~\ref{EVsec}. The state space conditions on $(T_xM)^2$ in the frame $e_{(\mu)}$ read
\begin{equation}\label{shellnormal}
\eta_{\mu\nu}u^\mu u^\nu=-1,\ u^0>0,\  \eta_{\mu\nu}u^\mu s^\nu=0,\ \eta_{\mu\nu}s^\mu s^\nu=\sigma^2.
\end{equation}
By $\eta_{\mu\nu}u^\mu u^\nu=-1$ and $u^0>0$, we have 
\begin{equation}\label{p02}
u^0=\sqrt{1+|{\bm u}|^2}=-u_0.
\end{equation} 
From $\eta_{\mu\nu}s^\mu u^\nu=0$ we infer ${\bm s}\cdot {\bm u}=s^0u^0$, where ${\bm a}\cdot{\bm b}$ denotes the standard Euclidean product of the three-dimensional vectors ${\bm a}, {\bm b}$. It follows that
\begin{equation}\label{s0}
s^0=\frac{{\bm s}\cdot {\bm u}}{u^0}=-s_0.
\end{equation}
Next we observe that 
\begin{equation}\label{obs}
\eta_{\mu\nu}s^\mu s^\nu = \mathfrak{h}_{ij}s^i s^j,
\end{equation}
where $\mathfrak{h}_{ij}$ are the components of the hyperbolic metric~\eqref{hypmet}.
The matrix $(\mathfrak{h}_{ij})$ is positive definite. Let $\sqrt{\mathfrak{h}}$ denote the square root of $\mathfrak{h}$, that is,
\begin{equation}\label{sqrth}
(\sqrt{\mathfrak{h}})_{ij}=\delta_{ij} -\frac{u_i u_j}{u^0(1+u^0)},\quad [(\sqrt{\mathfrak{h}})^{-1}]_{ij}=\delta_{ij} +\frac{u_i u_j}{1+u^0},
\end{equation}
where $(\sqrt{\mathfrak{h}})^{-1}$ is the inverse of $\sqrt{\mathfrak{h}}$. By~\eqref{obs}
we may introduce the unit vector ${\bm \omega}\in S^2$ as 
\begin{equation}\label{n}
\omega^j=\frac{1}{\sigma}(\sqrt{\mathfrak{h}})_i^{j} s^i,\quad\text{i.e.},\quad {\bm \omega}= \frac{1}{\sigma}\left({\bm s}-\frac{s^0}{1+u^0}{\bm u}\right).
\end{equation}
By~\eqref{n},  the quantity $\sigma{\bm \omega}$ is the spin vector in the rest frame of the particle. Moreover, by~\eqref{s0} and~\eqref{n}, 
\begin{equation}\label{sofn}
{\bm s}=\sigma\left({\bm \omega} + \frac{{\bm \omega}\cdot{\bm u}}{1+u^0}\,{\bm u}\right),\quad s^0=\sigma {\bm \omega}\cdot{\bm u}.
\end{equation}
It follows that $\Pi_{\sigma}[x]\simeq \R^3\times S^2$ and so the particle density~\eqref{pdens} can be written as a function of $(x,{\bm u},{\bm \omega})$; see~\eqref{newf} below. By further introducing the angles $\theta,\varphi$ through
\begin{equation}\label{thetaphi}
{\bm \omega}=(\sin\theta\cos\varphi,\sin\theta\sin\varphi,\cos\theta),\quad (\theta,\varphi)\in [0,\pi]\times [0,2\pi),
\end{equation}
then $f$ becomes a function of $(x,{\bm u},\theta,\varphi)$. However, except for a few calculations where it is convenient to do so, we shall not employ the spherical coordinates $(\theta,\varphi)$.
The orthonormal components of the Frenkel tensor $\phi_{ab}$ in the coordinates $({\bm u},{\bm \omega})$ are given by
\begin{equation}\label{phimunu}
\phi_{\mu\nu}=\frac{1}{2}\varepsilon_{\mu\nu\alpha\beta}s^\alpha u^\beta \Rightarrow\left\{\begin{array}{l}(\phi_{01},\phi_{02},\phi_{03})=\displaystyle{\frac{\sigma}{2}}  ({\bm \omega}\wedge {\bm u}),\\[0.5cm](\phi_{23},\phi_{31},\phi_{12})=-\displaystyle{\frac{\sigma}{2}} u^0(\sqrt{\mathfrak{h}}){\bm \omega},\\[0.5cm] \,\phi_{\mu\nu}=-\phi_{\nu\mu},\end{array}\right.
\end{equation}
where $\varepsilon_{\mu\nu\alpha\beta}$ denotes the four-dimensional Levi-Civita permutation symbol.

\subsection{Spacetime matter fields}\label{matterfieldssec}
To define the stress-energy tensor and the spin current in spacetime we need first to introduce a volume form on $\Pi_{\sigma}[x]$, which we do as follows. Consider the natural metric on $(T_xM)^2$:
\begin{equation}\label{Gmetric}
\mathfrak{G}((u,s),(u_*,s_*))=g(u,u_*)+ g(s,s_*).
\end{equation}
Since $\mathfrak{G}(y,y)=\sigma^2-1$ holds for $y=(u,s)\in\Pi_{\sigma}[x]$, then for $\sigma^2=1$ the metric induced on $\Pi_{\sigma}[x]$ by $\mathfrak{G}$ is degenerate and thus its volume form is singular. Therefore, 
\[
 \text{{\it in the rest of the paper we assume that} $\sigma\neq  1$}.
 \] 
 Let $\mathfrak{H}_{\sigma}$ be the metric induced by $\mathfrak{G}$ on $\Pi_{\sigma}[x]$ for $\sigma\neq  1$ and let $d\pi_{\sigma}(x)$ be its metric volume form. 
The particles number current $(N_f)^a$ and the stress-energy tensor $(T_f)^{ab}$ for particles with spin are given by
\begin{equation}\label{NT}
(N_f)^a(x) = \int_{\Pi_{\sigma}[x]}u^af\, d\pi_{\sigma}(x),\quad (T_f)^{ab}(x) = m\int_{\Pi_{\sigma}[x]}  u^a u^bf\, d\pi_{\sigma}(x).
\end{equation}
Due to the interpretation of $f$ as the particles number density on state space and of $u$ as the particles four-velocity, these are the only physically reasonable definitions of the fields $(N_f)^a$ and $(T_f)^{ab}$. Likewise, in agreement with the general definition of spacetime currents in kinetic theory, the spin current should be given by the integral over $\Pi_{\sigma}[x]$ in the measure $f d\pi_{\sigma}(x)$ of a microscopic field of the form $\kappa_{ab} u^c$, where $\kappa_{ab}=\kappa_{ab}(u,s)$ is a skew-symmetric tensor representing the spin variable of the individual particles. Choosing $\kappa_{ab}$ to be the Frenkel tensor $\phi_{ab}$ 
leads us to define the spin current in spacetime as 
\begin{equation}\label{Sabc}
\tensor{(S_f)}{_a_b^c}(x) =  \int_{\Pi_{\sigma}[x]}  \phi_{ab}  u^c f\, d\pi_{\sigma}(x).
\end{equation}
In particular, $\tensor{(S_f)}{_a_b^c}$ satisfies $\tensor{(S_f)}{_a_b^c}=-\tensor{(S_f)}{_b_a^c}$ and, using $\phi_{ab}u^b=0$, the Frenkel condition holds:
\begin{equation}
\tensor{(S_f)}{_a_b^a}=0.
\end{equation}

The definitions of the tensor felds $N_f,T_f, S_f$ 
will now be written in the coordinates system $({\bm u}, {\bm \omega})$ on $\Pi_{\sigma}[x]$ introduced in Section~\ref{kineticEC}. We begin by computing the volume form $d\pi_{\sigma}(x)$ in these coordinates. We claim that
\begin{subequations}\label{volform}
\begin{equation}
d\pi_{\sigma}(x)=\sigma^2\sqrt{|1-\sigma^2|}\,\frac{d{\bm u}\,d{\bm \omega}}{u^0},
\end{equation}
where $d{\bm u} = du^1 \wedge du^2\wedge du^3$ and $d{\bm \omega}$ is the standard surface element on the unit sphere; that is,
employing the spherical coordinates representation~\eqref{thetaphi} of ${\bm \omega}\in S^2$, 
\begin{equation}
d{\bm \omega}=\sin\theta\, d\theta\wedge d\varphi.
\end{equation}
\end{subequations}
To prove~\eqref{volform}, let $y=(u,s)=(\sqrt{1+|{\bm u}|^2}, {\bm u}, s^0, {\bm s})$ and $z=({\bm u},\theta,\varphi)$.  Denote by $\mathcal{H}$ the $5\times 5$ matrix of components of $\mathfrak{H}_{\sigma}$ and by $\mathcal{G}$ the $8\times 8$ matrix of components of $\mathfrak{G}$. Let $\partial_zy$ be the $5\times 8$ matrix
\[
\partial_zy=\left(\begin{array}{cc}\partial_{{\bm u}}u & \partial_{{\bm u}}s\\ \partial_{(\theta,\varphi)}u & \partial_{(\theta,\varphi)}s\end{array}\right)=\left(\begin{array}{cc}\partial_{{\bm u}}u & \partial_{{\bm u}}s\\ 0 & \partial_{(\theta,\varphi)}s\end{array}\right).
\]
Then
\[
\mathcal{H}=(\partial_z y)\mathcal{G}(\partial_zy)^T=\left(\begin{array}{cc} \mathcal{A} & \mathcal{B}\\ \mathcal{B}^T & \mathcal{C}\end{array}\right),
\]
where $\mathcal{A},\mathcal{B},\mathcal{C}$ are the the matrices
\begin{align*}
&\mathcal{A}=(\partial_{{\bm u}}u)\eta(\partial_{{\bm u}}u)^T+(\partial_{{\bm u}}s)\eta(\partial_{{\bm u}}s)^T,\\
&\mathcal{B}=(\partial_{{\bm u}}s)\eta(\partial_{(\theta,\varphi)}s),\\
&\mathcal{C}=(\partial_{(\theta,\varphi)}s)\eta(\partial_{(\theta,\varphi)}s)=\sigma^2\left(\begin{array}{ll} 1 & 0\\ 0 & \sin^2\theta\end{array}\right).
\end{align*}
Therefore, 
\[
\det (\mathcal{H})=\det(\mathcal{C})\det(\mathcal{A}-\mathcal{B}\mathcal{C}^{-1}\mathcal{B}^T)=\sigma^4\sin^2\theta\det(\mathcal{D}),\quad \mathcal{D}=\mathcal{A}-\frac{\mathcal{B}\mathcal{B}^T}{\sigma^2}.
\]
The computation of $\det(\mathcal{D})$ is very long and so we present only the result,\footnote{Part of this computation has been carried out with Mathematica.} which is
\[
\det(\mathcal{D})=\frac{1-\sigma^2}{(u^0)^2};
\]
hence, $d\pi_{\sigma}(x)=\sqrt{|\det(\mathcal{H})|}\,d{\bm u}\, d\theta\wedge d\varphi$ is given by~\eqref{volform}.

It follows that the tensors~\eqref{NT}-\eqref{Sabc} in the coordinates $({\bm u}, {\bm \omega})$ read 
\begin{subequations} \label{NTS}
\begin{align}
&(N_f)^a=(N_f)^\mu e_{(\mu)}^a,\quad (N_f)^\mu =\int_{\R^3\times S^2}u^\mu f_{\sigma}\,\frac{d{\bm u}\,d{\bm \omega}}{u^0},\\
&(T_f)^{ab}=T^{\mu\nu}e_{(\mu)}^a e_{(\nu)}^b,\quad (T_f)^{\mu\nu}(x) = m\int_{\R^3\times S^2}u^\mu u^\nu f_{\sigma}\,\frac{d{\bm u}\,d{\bm \omega}}{u^0},\\
&\tensor{(S_f)}{_a_b^c}=\tensor{(S_f)}{_\mu_\nu^\alpha}e^{(\mu)}_ae^{(\nu)}_b e_{(\alpha)}^c,\quad \tensor{(S_f)}{_\mu_\nu^\alpha}(x) = \int_{\R^3\times S^2}\phi_{\mu\nu}u^\alpha f_{\sigma}\,\frac{d{\bm u}\,d{\bm \omega}}{u^0},
\end{align}
\end{subequations}
where $\phi_{\mu\nu}$ are given by~\eqref{phimunu} and
\begin{equation}\label{newf}
f_{\sigma}(x,{\bm u}, {\bm \omega})=\sigma^2\sqrt{|1-\sigma^2|}\,f\! \left(x,\sqrt{1+|{\bm u}|^2}, {\bm u},\sigma\, {\bm \omega}\cdot{\bm u},\sigma\, {\bm \omega} + \frac{\sigma {\bm \omega}\cdot{\bm u}}{1+\sqrt{1+|{\bm u}|^2}}{\bm u}\right).
\end{equation}
By~\eqref{NTS}, $f_{\sigma}$ is the kinetic particle density in the variables $({\bm u},{\bm \omega})\in\R^3\times S^2$. 
\begin{remark}
\textnormal{Upon introducing the spin average of the kinetic particle density $f_{\sigma}$ as
\[
f_*(x,{\bm u})=\int_{S^2}f_{\sigma}(x,{\bm u}, {\bm \omega})\,d{\bm \omega},
\]
the tensor fields $N_f$, $T_f$ in~\eqref{NTS} reduce to the particles number current and the stress-energy tensor~\eqref{NTnew} of spinless particles. Therefore, in the absence of torsion the particles spin averages out and gives no contribution to the spacetime geometry.}
\end{remark}
If only a species of particles with kinetic density $f$ is present, the Cartan equation~\eqref{cart} for the spacetime torsion is
\begin{subequations}\label{EinsteinCartanKinetic}
\begin{equation}\label{cartanKinetic}
\tensor{C}{_a_b^c}=16\pi \tensor{(S_f)}{_a_b^c}
\end{equation}
and the Einstein equation~\eqref{EinsteinWithH} for the metric is
\begin{equation}\label{einsteinKinetic}
R_{ab}-\frac{1}{2}g_{ab}R=8\pi ((T_f)_{ab}+ (H_f)_{ab}),
\end{equation}
\end{subequations}
where $(H_f)_{ab}$ is given as in~\eqref{H} with $S\equiv S_f$. 
In particular, the conservation law of energy-momentum~\eqref{conservationT} takes the form
\begin{equation}\label{consTKinetic}
\nabla_b (T_f)^{ab}=(J_f)^a,\quad (J_f)^a:=-\nabla_b (H_f)^{ab}.
\end{equation}
Suppose now that an additional species of spin particles with the same mass $m$ and spin magnitude $\sigma$ is present and let $\overline{f}$ be the kinetic density of the new particles. The total stress-energy tensor, particles number current and spin current are given by
\[
T_f+T_{\overline{f}}=T_{f+\overline{f}},\quad N_f+N_{\overline{f}}=N_{f+\overline{f}},\quad S_f+S_{\overline{f}}=S_{f+\overline{f}}
\] 
and therefore the Einstein-Cartan equations in this case are the same as~\eqref{EinsteinCartanKinetic} with the substitutions 
\[
S_f\to S_{f+\overline{f}},\quad T_f\to T_{f+\overline{f}},\quad H_f\to H_{f+\overline{f}}. 
\]
The conservation law of energy-momentum~\eqref{conservationT} in the presence of two species of particles reads
\begin{equation}\label{consmom2species}
\nabla_b [(T_f)^{ab}+(T_{\overline{f}})^{ab}]=\nabla_b (T_{f+\overline{f}})^{ab}=(J_{f+\overline{f}})^a.
\end{equation}
A simple but important observation at this point is that $J_{f+\overline{f}}\neq J_{f}+J_{\overline{f}}$ and therefore requiring~\eqref{consTKinetic} to hold for each species of particles does not imply that~\eqref{consmom2species} is satisfied. Instead, some kind interaction between the particles of one species with the particles of the other species is necessary. 


\section{The Vlasov equation for spin particles}\label{VlasovSec}
The purpose of this section is to derive a Vlasov equation for spin particles that is consistent with the conservation law~\eqref{consTKinetic}.
We start our argument with a vector field $W$ on the bundle $Q=\cup_{x\in M}(T_xM)^2$ of the following form:
\begin{equation}\label{W}
W=u^\beta e_{(\beta)}^\alpha X_{(\alpha)}+A^\mu U_{(\mu)}+B^\nu S_{(\nu)},
\end{equation}
where $(X_{(\alpha)},U_{(\mu)},S_{(\nu)})\psi=(\partial_{x^\alpha}\psi,\partial_{u^\mu}\psi,\partial_{s^\nu}\psi)$ for all smooth functions $\psi: Q\to\R$, $(x^\alpha, u^\mu, s^\nu)$ are the local coordinates on $Q$ introduced in Section~\ref{kineticEC} and $A^\mu, B^\mu$ are functions of $(x,u,s)$. The vector field $W$ is tangent to the state space $\Pi_{\sigma}$ if and only if, for all curves 
\[
(x^\alpha(\tau),u^\mu(\tau),s^\nu(\tau))\subset Q
\]
such that
\begin{equation}\label{integralcurves}
\frac{d x^\alpha}{d\tau}=u^\beta e_{(\beta)}^\alpha,\quad 
\frac{du^\mu}{d\tau}=A^\mu(x,u,s),\quad \frac{ds^\nu}{d\tau}= B^\nu(x,u,s),
\end{equation}
there hold
\[
\frac{d}{d\tau}\eta_{\mu\nu}u^\mu u^\nu=\frac{d}{d\tau}\eta_{\mu\nu}
s^\mu u^\nu=\frac{d}{d\tau}\eta_{\mu\nu}s^\mu s^\nu=0.
\]
Equivalently, the functions $A^\mu, B^\mu$ must satisfy
\begin{equation}\label{firstcond}
A^\mu u_\mu=0,\quad A^\mu s_\mu +B^\mu u_\mu=0,\quad B^\mu s_\mu=0.
\end{equation}
Assuming~\eqref{firstcond}, we may express $W$ in the coordinates $({\bm u}, {\bm \omega})\in\R^3\times S^2$ on $\Pi_{\sigma}[x]$ introduced in Section~\ref{kineticEC}. To achieve this, let
\begin{equation}\label{gstar}
\psi_*(x,{\bm u}, {\bm \omega})=\psi \left(x,\sqrt{1+|{\bm u}|^2}, {\bm u},\sigma\, {\bm \omega}\cdot{\bm u},\sigma\, {\bm \omega} + \frac{\sigma {\bm \omega}\cdot{\bm u}}{1+\sqrt{1+|{\bm u}|^2}}{\bm u}\right)
\end{equation}
be the restriction of $\psi=\psi(x,u,s)$ on $\{(u,s)\in\R^8\,:\,\eqref{shellnormal}\ \text{hold}\}$. 
\begin{lemma}\label{long}
For all $A^\mu,B^\mu$ satisfying~\eqref{firstcond} there holds
\begin{align*}
A^\mu(\partial_{u^\mu}\psi)_*+B^\mu(\partial_{s^\mu}\psi)_*=&\,A^i\partial_{u^i}\psi_*+\frac{(\sqrt{\mathfrak{h}})_i^j}{\sigma}\left(B^i-\frac{s^0}{1+u^0}A^i\right)\slashed{\partial}_{\omega^j}\psi_*\\
&-\frac{(\sqrt{\mathfrak{h}})_{ik}A^i\omega^k}{u^0(1+u^0)}u^j\slashed{\partial}_{\omega^j}\psi_*,
\end{align*}
where $s^0=\sigma {\bm \omega}\cdot{\bm u}$, $u^0=\sqrt{1+|{\bm u}|^2}$ and $\slashed{\partial}_{\bm \omega}=(\slashed{\partial}_{\omega^1},\slashed{\partial}_{\omega^2},\slashed{\partial}_{\omega^2})$, $\slashed{\partial}_{\omega^i}=(\delta^j_i-\omega_i\omega^j)\partial_{\omega^j}$, denotes the gradient operator on $S^2$.
\end{lemma}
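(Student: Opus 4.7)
The strategy is to interpret both sides as the $\tau$-derivative at $\tau=0$ of $\psi$ along an integral curve. Given $(A^\mu,B^\mu)$ obeying~\eqref{firstcond}, these three conditions are exactly the tangent-space equations of $\Pi_\sigma[x]\subset(T_xM)^2$ at $(u,s)$, so a smooth curve $\tau\mapsto(u(\tau),s(\tau))\in\Pi_\sigma[x]$ with $(\dot u^\mu,\dot s^\mu)(0)=(A^\mu,B^\mu)$ exists, and the left-hand side equals $(d/d\tau)|_0\psi(x,u(\tau),s(\tau))$ by the chain rule.

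Now I would re-express this curve in the coordinates $(\bm u,\bm\omega)\in\R^3\times S^2$ through~\eqref{sofn}. Because $\bm\omega$ is constrained to the sphere, I extend $\psi_*$ radially by $\tilde\psi_*(x,\bm u,\bm\omega)=\psi_*(x,\bm u,\bm\omega/|\bm\omega|)$, which satisfies $\partial_{\omega^j}\tilde\psi_*|_{|\bm\omega|=1}=\slashed\partial_{\omega^j}\psi_*$. Since the curve stays on $\R^3\times S^2$, the chain rule gives
\[
\frac{d}{d\tau}\bigg|_0\psi(x,u(\tau),s(\tau))=A^i\,\partial_{u^i}\psi_*+\dot\omega^j(0)\,\slashed\partial_{\omega^j}\psi_*,
\]
so the task reduces to identifying $\dot\omega^j(0)$ with the claimed $\slashed\partial_{\omega^j}$-coefficient.

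Differentiating $\omega^j=\sigma^{-1}(s^j-s^0u^j/(1+u^0))$ from~\eqref{n} along the curve yields
\[
\sigma\dot\omega^j=B^j-\frac{B^0u^j}{1+u^0}-\frac{s^0A^j}{1+u^0}+\frac{s^0A^0u^j}{(1+u^0)^2},
\]
where I use $\partial_\tau u^0=A^0$ (from $u^0=\sqrt{1+|\bm u|^2}$ and the constraint $A^\mu u_\mu=0$). Next I eliminate $B^0$ via the mixed constraint $A^\mu s_\mu+B^\mu u_\mu=0$, expanding $\bm A\cdot\bm s=\sigma\bm A\cdot\bm\omega+s^0A^0u^0/(1+u^0)$ with the help of~\eqref{sofn}. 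The identity $\delta^j_i-u_iu^j/[u^0(1+u^0)]=(\sqrt{\mathfrak h})^j_i$ from~\eqref{sqrth} together with the companion relation
\[
(\sqrt{\mathfrak h})_{ik}A^i\omega^k=\bm A\cdot\bm\omega-\frac{A^0s^0}{\sigma(1+u^0)},
\]
(which uses $\bm u\cdot\bm\omega=s^0/\sigma$ and $\bm A\cdot\bm u=A^0u^0$) then let me rearrange the previous display into
\[
\sigma\dot\omega^j=(\sqrt{\mathfrak h})^j_i\Bigl(B^i-\tfrac{s^0}{1+u^0}A^i\Bigr)-\frac{\sigma(\sqrt{\mathfrak h})_{ik}A^i\omega^k}{u^0(1+u^0)}u^j,
\]
which is $\sigma$ times the coefficient appearing in the statement of the lemma.

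The main obstacle is purely algebraic bookkeeping: the several $u^j$-terms produced by the radial piece of $\sqrt{\mathfrak h}$, by the derivative of $1/[u^0(1+u^0)]$, and by the substitution for $B^0$ must all collapse into the single combination $(\sqrt{\mathfrak h})_{ik}A^i\omega^k u^j$. The third condition $B^\mu s_\mu=0$ in~\eqref{firstcond} is not explicitly needed in this chain of substitutions; it plays the role of a built-in consistency check, being equivalent to $\dot{\bm\omega}\cdot\bm\omega=0$, i.e., to preservation of $|\bm\omega|^2=1$ along the curve.
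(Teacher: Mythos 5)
Your proof is correct, and I verified the key algebraic step: substituting $B^0u^0={\bm B}\cdot{\bm u}+\sigma{\bm A}\cdot{\bm\omega}-s^0A^0/(1+u^0)$ (from the mixed constraint together with~\eqref{sofn}) into your expression for $\sigma\dot\omega^j$ does reproduce the coefficient $(\sqrt{\mathfrak h})^j_i\bigl(B^i-\tfrac{s^0}{1+u^0}A^i\bigr)-\sigma(\sqrt{\mathfrak h})_{ik}A^i\omega^k\,u^j/[u^0(1+u^0)]$ claimed in the lemma. Your route is the dual of the paper's: the paper transforms the differential operators, computing $\partial_{u^i}\psi_*$ and $\partial_{\omega^i}\psi_*$ in terms of the ambient partials $(\partial_{u^\mu}\psi)_*,(\partial_{s^\mu}\psi)_*$, inverting the relation for $(\partial_{s^j}\psi)_*$, contracting with $A^\mu,B^\mu$, and only at the very end splitting $\partial_{\omega^j}$ into $\slashed\partial_{\omega^j}$ plus a radial part that vanishes by the constraints; you instead transform the vector field itself, pushing $(A^\mu,B^\mu)$ forward through the map $(u,s)\mapsto({\bm u},{\bm\omega})$ along an integral curve. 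What your version buys is the avoidance of the matrix inversion and of the intermediate identity~\eqref{dpig2}, at the cost of two prefatory justifications the paper does not need: the existence of a curve in $\Pi_\sigma[x]$ tangent to $(A,B)$, and the radial extension of $\psi_*$ off $S^2$ (both of which you supply). One small caveat on your closing remark: while $B^\mu s_\mu=0$ indeed never enters the elimination of $B^0$ and $A^0$, it is not merely a consistency check — it is exactly what makes $(A,B)$ tangent to the level set of $s^\mu s_\mu$, hence what guarantees both the existence of your curve inside $\Pi_\sigma[x]$ and the identity $\tfrac{d}{d\tau}\psi=A^i\partial_{u^i}\psi_*+\dot\omega^j\slashed\partial_{\omega^j}\psi_*$; without it the left-hand side of the lemma involves the normal derivative of $\psi$ and cannot be expressed through $\psi_*$ at all. (The paper uses it at the corresponding point, namely in showing that the radial term $\omega_j\omega^k\partial_{\omega^k}\psi_*$ drops out.)
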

\begin{proof}
See Appendix.
\end{proof}

For the purpose of computing the left hand side of~\eqref{conservationT}, it is convenient to split the functions $A^\mu, B^\mu$ as
\begin{equation}\label{ABnew}
A^\mu = \tensor{\gamma}{_\alpha_\beta^\mu}u^\alpha u^\beta+a^\mu,\quad B^\mu=\tensor{\gamma}{_\alpha_\beta^\mu}u^\alpha s^\beta+b^\mu,
\end{equation}
where $\tensor{\gamma}{_\alpha_\beta^\mu}$ are the Ricci rotation coefficients of the orthonormal frame $e_{(\mu)}$ in the Levi-Civita connection $\nabla$ and $a^\mu,b^\mu$ are arbitrary functions of $(x,{\bm u}, {\bm \omega})$ that satisfy 
\[
a^\mu u_\mu =b^\mu s_\mu=a^\mu s_\mu+b^\mu u_\mu =0,
\]
or, equivalently,
\begin{equation}\label{zerocom2}
a^0=\frac{{\bm a}\cdot{\bm u}}{u^0},\quad b^0=\frac{{\bm b}\cdot{\bm s}}{s^0},\quad (u^0{\bm s}-s^0{\bm u})\cdot(s^0{\bm a}-u^0{\bm b})=0.
\end{equation}
The identity in Lemma~\ref{long} applied to the vector fields~\eqref{ABnew} can be written in the form 
\begin{equation}\label{chain2}
(F\psi)_*=(\tensor{\gamma}{_\alpha_\beta^i}u^\alpha u^\beta +a^i)\partial_{u^i}\psi_*+\left(\tensor{\gamma}{_\alpha_k^i}u^\alpha \omega^k+y^i\right)\slashed{\partial}_{\omega^i}\psi_*
\end{equation}
for all smooth functions $\psi=\psi(x,u,s)$, where
\begin{equation}\label{U}
F=(\tensor{\gamma}{_\alpha_\beta^\mu}u^\alpha u^\beta+a^\mu)\partial_{u^\mu}+(\tensor{\gamma}{_\alpha_\beta^\mu}u^\alpha s^\beta+b^\mu)\partial_{s^\mu},
\end{equation}
and $y^i=y^i(x,{\bm u},{\bm \omega})$ are given by
\begin{equation}\label{y}
y^i=\frac{\tensor{\gamma}{_\alpha_0^j}u^\alpha}{1+u^0}({\bm \omega}\cdot{\bm u}\,\delta_j^i-\omega_j u^i)+\frac{1}{\sigma}(\sqrt{\mathfrak{h}})^i_jb^j-\frac{(\sqrt{\mathfrak{h}})_j^k}{1+u^0}\left(\frac{s^0}{\sigma}\,\delta_k^i+\frac{\omega_k u^i}{u^0}\right)a^j.
\end{equation}
As
 \[
 y^i\omega_i=\frac{1}{\sigma u^0}(\sqrt{\mathfrak{h}})_j^i(u^0b^j-s^0a^j)\omega_i=-\frac{1}{\sigma^2(u^0)^2}(u^0{\bm s}-s^0{\bm u})\cdot(s^0{\bm a}-u^0{\bm b}),
 \] 
 the third equation in~\eqref{zerocom2}  is equivalent to ${\bm y}\cdot{\bm \omega} =0$.
Thus, choosing  $a^\mu$, $b^\mu$ satisfying~\eqref{zerocom2} is equivalent to 
\begin{itemize}
\item[(i)] choosing ${\bm a}=(a^1,a^2,a^3)$ arbitrarily and setting $a^0={\bm a}\cdot{\bm u}/u^0$;
\item[(ii)] introducing ${\bm n}={\bm n}(x,{\bm u}, {\bm \omega})$, such that 
\begin{equation}\label{condb}
{\bm n}\cdot{\bm \omega}=0;
\end{equation}  
\item[(iii)] choosing $b^i$ such that $y^i= n^i$, where $y^i$ is given by~\eqref{y};
\item[(iv)] setting $b^0={\bm b}\cdot{\bm s}/s^0$.
\end{itemize}
We conclude that the most general equation $W(f_{\sigma})=0$ on the kinetic particle density  is
\begin{equation}\label{VlasovZ}
u^\alpha e_{(\alpha)}^\mu \partial_{x^\mu}f_{\sigma}+(\tensor{\gamma}{_\alpha_\beta^i}u^\alpha u^\beta +a^i)\partial_{u^i}f_{\sigma}+\left(\tensor{\gamma}{_\alpha_k^i}u^\alpha \omega^k+n^i\right)\slashed{\partial}_{\omega^i}f_{\sigma}=0,
\end{equation}
where ${\bm a},{\bm n}$ are arbitrary, up to the constraint~\eqref{condb}.

\begin{remark}\textnormal{
To express~\eqref{VlasovZ} in terms of the angles $(\theta,\varphi)$ in~\eqref{thetaphi}, one has to use the transformation
\begin{equation}\label{gradS2}
\slashed{\partial}_{{\bm \omega}} = \mathcal{U}\,\left(\begin{array}{c}\partial_\theta\\(\sin\theta)^{-1}\partial_\varphi\end{array}\right),\quad \mathcal{U}=\left(
\begin{array}{cc}
 \cos \theta  \cos \varphi  & - \sin \varphi \\
 \cos \theta  \sin \varphi  & \cos \varphi \\
 -\sin \theta  & 0 \\
\end{array}
\right),
\end{equation}
relating the gradient on $S^2$ in Cartesian and spherical coordinates.}
\end{remark}
The next step is to choose the vectors  ${\bm a},{\bm n}$ in~\eqref{VlasovZ} in such a way that the constraint~\eqref{consTKinetic} is satisfied. This step requires computing $\nabla_b (T_f)^{ab}$ from~\eqref{VlasovZ}, which we do by using the following lemma.

\begin{lemma}\label{smart}
Let $\psi=\psi(x,u,s)$ be given and $\psi_*=\psi_*(x,{\bm u}, {\bm \omega})$ be defined as in~\eqref{gstar}. Assume that $f_{\sigma}$ solves~\eqref{VlasovZ} and that the vector fields ${\bm a} f_{\sigma}$, ${\bm  n} f_{\sigma}$ are smooth. Let 
\[
\Psi^{\nu}(x)=\int_{\R^3\times S^2}\psi_*(x,{\bm u},{\bm \omega})\,u^\nu\,f_{\sigma}\,\frac{d{\bm u}\,d{\bm \omega}}{u^0}.
\]
Then 
\begin{align*}
e^{\beta}_{(\nu)}\partial_{x^\beta}\Psi^\nu = &\int_{\R^3\times S^2}[e^{\beta}_{(\nu)}u^\nu\partial_{x^\beta}\psi_*+(F\psi)_*+\tensor{\gamma}{^\beta_\alpha_\beta}u^\alpha \psi_*]\,f_{\sigma}\,\frac{d{\bm u}\,d{\bm \omega}}{u^0}\\
&+\int_{\R^3\times S^2}\left(\slashed{\partial}_{{\bm \omega}}\cdot {\bm n}+\partial_{\bm u}\cdot {\bm a}-\frac{{\bm a}\cdot{\bm u}}{(u^0)^2}\right)\psi_*f_{\sigma}\,\frac{d{\bm u}\,d{\bm \omega}}{u^0},
\end{align*}
where $F$ is given by~\eqref{U}.
\end{lemma}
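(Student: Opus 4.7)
The plan is to commute the $x$-derivative under the integral defining $\Psi^\nu$ and apply the Leibniz rule, so that one term has the derivative landing on $\psi_*$ (giving immediately the $e^\beta_{(\nu)}u^\nu\partial_{x^\beta}\psi_*$ contribution of the lemma) and the other has the derivative hitting $f_\sigma$. In the latter I invoke the Vlasov equation~\eqref{VlasovZ} to trade $u^\nu e^\beta_{(\nu)}\partial_{x^\beta}f_\sigma$ for $-A^i_*\partial_{u^i}f_\sigma - N^i\slashed{\partial}_{\omega^i}f_\sigma$, where I write $A^i_*:=\tensor{\gamma}{_\alpha_\beta^i}u^\alpha u^\beta + a^i$ and $N^i:=\tensor{\gamma}{_\alpha_k^i}u^\alpha\omega^k + n^i$.

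The next step is integration by parts in ${\bm u}$ and on $S^2$ to move these derivatives onto $\psi_*$ (and onto the coefficients and weight). In the ${\bm u}$-variable, boundary terms vanish under the smoothness/decay assumption on ${\bm a}f_\sigma$, and the derivative of the weight produces $\partial_{u^i}(1/u^0) = -u_i/(u^0)^3$. On the sphere I use the divergence identity $\int_{S^2}V^i\slashed{\partial}_{\omega^i}h\,d{\bm\omega} = -\int_{S^2}h\,\slashed{\partial}_{\omega^i}V^i\,d{\bm\omega}$, valid for any tangent vector field $V$ on $S^2$. The key point is that ${\bm N}$ is tangent, i.e.\ $N^i\omega_i=0$: this follows from ${\bm n}\cdot{\bm\omega}=0$ (the constraint~\eqref{condb}) together with $\gamma_{\alpha k i}\omega^k\omega^i = 0$, which is immediate from the antisymmetry $\gamma_{\alpha\beta\mu}=-\gamma_{\alpha\mu\beta}$.

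Collecting the resulting pieces, the terms carrying derivatives of $\psi_*$ combine as $A^i_*\partial_{u^i}\psi_* + N^i\slashed{\partial}_{\omega^i}\psi_*$, which by~\eqref{chain2} with $y^i=n^i$ is precisely $(F\psi)_*$. The remaining (``coefficient'') terms contribute $\psi_*\mathcal{R}$ with $\mathcal{R} = \partial_{u^i}A^i_* - u_iA^i_*/(u^0)^2 + \slashed{\partial}_{\omega^i}N^i$. The ${\bm a}$-part of $\partial_{u^i}A^i_*$ is $\partial_{\bm u}\cdot{\bm a}$; the ${\bm a}$-part of $u_iA^i_*/(u^0)^2$ is ${\bm a}\cdot{\bm u}/(u^0)^2$; the ${\bm n}$-part of $\slashed{\partial}_{\omega^i}N^i$ is $\slashed{\partial}_{\bm\omega}\cdot{\bm n}$; and the residual $\slashed{\partial}_{\omega^i}(\tensor{\gamma}{_\alpha_k^i}u^\alpha\omega^k)$ vanishes, using $\slashed{\partial}_{\omega^i}\omega^k = \delta^k_i - \omega_i\omega^k$ together with the antisymmetries $\tensor{\gamma}{_\alpha_i^i}=0$ and $\gamma_{\alpha k i}\omega^k\omega^i = 0$.

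The one remaining task---and the main calculational obstacle---is the purely algebraic identity
\[
\partial_{u^i}(\tensor{\gamma}{_\alpha_\beta^i}u^\alpha u^\beta) - \frac{u_i}{(u^0)^2}\tensor{\gamma}{_\alpha_\beta^i}u^\alpha u^\beta = \tensor{\gamma}{^\beta_\alpha_\beta}u^\alpha.
\]
I would prove this by separating time and spatial indices, expanding the left side with $\partial_{u^i}u^j=\delta^j_i$ and $\partial_{u^i}u^0 = u^i/u^0$, and using the ``normalization'' identity $u_\mu\tensor{\gamma}{_\alpha_\beta^\mu}u^\alpha u^\beta = 0$---itself a direct consequence of the antisymmetry of $\gamma$ and the symmetry of $u^\alpha u^\beta$---to process the second term. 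After the simplifications $\tensor{\gamma}{_\alpha_i^i}=0$ and $\gamma_{\alpha 00}=0$ (again from antisymmetry), the leftover cross-terms collapse to $u^\alpha u^\mu\gamma_{0\alpha\mu}$, which vanishes by contraction of the symmetric $u^\alpha u^\mu$ against the $(\alpha,\mu)$-antisymmetric $\gamma_{0\alpha\mu}$. This identifies the coefficient structure with that of the lemma and completes the proof.
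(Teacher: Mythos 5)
Your proposal is correct and follows essentially the same route as the paper's proof: split the $x$-derivative by Leibniz, substitute the Vlasov equation~\eqref{VlasovZ} for the transport term on $f_\sigma$, integrate by parts in ${\bm u}$ and on $S^2$, and identify the derivative terms with $(F\psi)_*$ via~\eqref{chain2}, the coefficient terms reducing through the identities $\tensor{\gamma}{_\alpha_\beta^i}\partial_{u^i}(u^\alpha u^\beta/u^0)=\tensor{\gamma}{^\beta_\alpha_\beta}u^\alpha/u^0$ and $\slashed{\partial}_{\omega^i}(\tensor{\gamma}{_\alpha_k^i}u^\alpha\omega^k)=0$. Your added justifications (tangency $N^i\omega_i=0$ for the spherical integration by parts, and the explicit verification of the $\gamma$-identity via $\gamma_{\alpha\beta\mu}=-\gamma_{\alpha\mu\beta}$) are sound and merely make explicit what the paper leaves implicit.
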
 
\begin{proof}
See Appendix.
\end{proof}
Applying Lemma~\ref{smart} with $\psi= u^\mu$, and using $F(u^\mu)=\tensor{\gamma}{_\alpha_\beta^\mu}u^\alpha u^\beta+a^\mu$, we find
 \[
\nabla_\nu (T_f)^{\mu\nu}=m\int_{\R^3\times S^2}\left[\left(\slashed{\partial}_{{\bm \omega}}\cdot {\bm n}+\partial_{\bm u}\cdot {\bm a}-\frac{{\bm a}\cdot{\bm u}}{(u^0)^2}\right)u^\mu +a^\mu\right]\,f_{\sigma}\,\frac{d{\bm u}\,d{\bm \omega}}{u^0}.
 \]
In order to turn the previous equation into an identity valid for all kinetic densities $f_\sigma$, we let
 \begin{equation}\label{addconst}
\left(\slashed{\partial}_{{\bm \omega}}\cdot {\bm n}+\partial_{\bm u}\cdot {\bm a}-\frac{{\bm a}\cdot{\bm u}}{(u^0)^2}\right)u^\mu +a^\mu=\frac{1}{\mathcal{M}_f}\nabla_\nu (T_f)^{\mu\nu},
 \end{equation}
 where $\mathcal{M}_f$ is the Lorentz invariant mass function
 \[
 \mathcal{M}_f=\mathcal{M}_f(x)=m\int_{\R^3\times S^2} f_{\sigma}\,\frac{d{\bm u}\,d{\bm \omega}}{u^0}.
 \]
From~\eqref{addconst} and $a^\mu u_\mu=0$ we obtain
 \begin{equation}\label{remark}
\slashed{\partial}_{{\bm \omega}}\cdot {\bm n}+\partial_{\bm u}\cdot {\bm a}-\frac{{\bm a}\cdot{\bm u}}{(u^0)^2}=-\frac{u_\mu \nabla_\nu (T_f)^{\mu\nu}}{\mathcal{M}_f}.
 \end{equation}
Substituting~\eqref{remark} in~\eqref{addconst} we find that  the constraint~\eqref{addconst} is equivalent to choosing $a^\mu$ as
 \begin{equation}\label{thetamu}
 a^\mu=\frac{1}{\mathcal{M}_f}\left(\delta_\alpha^\mu+u^\mu u_\alpha\right)\nabla_\nu (T_f)^{\alpha\nu},
 \end{equation}
 i.e., as the projection of the vector field $\mathcal{M}_f^{-1}\nabla_b (T_f)^{ab}$ onto the plane orthogonal to $u$. 
Using~\eqref{thetamu} in~\eqref{remark} gives the following equation on ${\bm n}$: 
 \begin{equation}\label{eqb}
\slashed{\partial}_{{\bm \omega}}\cdot {\bm n}=-4\frac{u_\mu\nabla_\nu (T_f)^{\mu\nu}}{\mathcal{M}_f}.
 \end{equation}
As the right hand side of~\eqref{eqb} is independent of ${\bm \omega}\in S^2$, we may write ${\bm n}$ as 
\begin{equation}\label{ndef}
{\bm n}(x,{\bm u}, {\bm \omega})=-4\frac{u_\mu\nabla_\nu (T_f)^{\mu\nu}}{\mathcal{M}_f}{\bm \xi}(x,{\bm u}, {\bm \omega}),
\end{equation}
where the vector ${\bm \xi}(x,{\bm u},{\bm \omega})$ satisfies
\begin{equation}\label{xieq}
{\bm\xi}\cdot{\bm \omega}=0,\quad \slashed{\partial}_{{\bm \omega}}\cdot {\bm \xi}=1.
\end{equation}
There exist of course infinitely many vectors ${\bm \xi}$ satisfying~\eqref{xieq}. Before presenting one example, we derive the evolution equation satisfied by the particles number current $N_f$, which is independent of the choice of ${\bm \xi}$. This equation on $N_f$ follows by
applying Lemma~\ref{smart} with $\psi_*(x,p,s)= 1$, which gives
\[
\nabla_\nu (N_f)^\nu=\int_{\R^3\times S^2}\left(\slashed{\partial}_{{\bm \omega}}\cdot {\bm n}+\partial_{\bm u}\cdot {\bm a}-\frac{{\bm a}\cdot{\bm u}}{(u^0)^2}\right)\,f_{\sigma}\,\frac{d{\bm u}\,d{\bm \omega}}{u^0}
\]
 and so, by~\eqref{remark}, 
 \begin{equation}\label{Nnotcons}
 \nabla_\mu (N_f)^\mu=-\frac{(N_f)_\mu\nabla_\nu (T_f)^{\mu\nu}}{\mathcal{M}_f}.
 \end{equation}
In conclusion, we have derived the following Vlasov equation on the kinetic density $f_\sigma=f_\sigma(x,{\bm u},{\bm \omega})$ of particles with mass $m$ and spin magnitude $\sigma$:
\begin{subequations}\label{VlasoveqParticles}
\begin{equation}
u^\alpha e_{(\alpha)}^\mu \partial_{x^\mu}f_{\sigma}+(\tensor{\gamma}{_\alpha_\beta^i}u^\alpha u^\beta +a^i)\partial_{u^i}f_{\sigma}+\left(\tensor{\gamma}{_\alpha_k^i}u^\alpha \omega^k+n^i\right)\slashed{\partial}_{\omega^i}f_{\sigma}=0,
\end{equation}
where
\begin{equation}\label{72b}
a^i=\frac{1}{\mathcal{M}_f}\left(\delta_\mu^i+u^i u_\mu\right)\nabla_\nu (T_f)^{\mu\nu},\quad n^i=-4\frac{u_\mu\nabla_\nu (T_f)^{\mu\nu}}{\mathcal{M}_f}\xi^i
\end{equation}
\end{subequations}
and ${\bm \xi}=(\xi^1,\xi^2,\xi^3)$ is an arbitrary vector that satisfies~\eqref{xieq}. We also obtained that the evolution equation for the particles number current is~\eqref{Nnotcons}. 

When only one species of particles is present, the term $\nabla_\nu (T_f)^{\mu\nu}$ in~\eqref{Nnotcons}-\eqref{VlasoveqParticles} must be replaced with $(J_f)^\nu$ in order for the constraint~\eqref{consTKinetic} to be satisfied. In particular, the Vlasov model for one species of particles derived in this section does not preserve the total number of particles. 

\subsubsection*{Example of solution to~\eqref{xieq}}
The constraint ${\bm \xi}\cdot {\bm \omega}=0$ implies that ${\bm \xi}$ 
 is the projection on the plane orthogonal to ${\bm \omega}$ of a (dimensionless) vector field ${\bm c}(x,{\bm u}, {\bm \omega})$, that is,
\begin{subequations}\label{ansatzxi}
\begin{equation}
{\bm \xi}={\bm \omega}\wedge ({\bm \omega}\wedge {\bm c}).
\end{equation}
A simple choice for the vector ${\bm c}$ is to take it parallel to the relativistic velocity ${\bm v}={\bm u}/u^0$. Specifically, we make the {\it ansatz}
\begin{equation}
{\bm c}=\zeta(z){\bm v},\quad z={\bm \omega}\cdot{\bm v}.
\end{equation}
\end{subequations}
Replacing the {\it ansatz}~\eqref{ansatzxi} into~\eqref{xieq} and computing the divergence on $S^2$ we obtain the following equation on the function $\zeta$: 
\[
\zeta'(z)(z^2-|{\bm v}|^2)+2z\zeta(z)=1,
\]
the solution of which is
\[
\zeta(z)=\frac{C-z}{|{\bm v}|^2-z^2},
\]
where $C$ is an arbitrary function of $|{\bm v}|$. Choosing $C\equiv 0$ we arrive to the following final form of the vector ${\bm \xi}$:
\[
{\bm \xi}=-\frac{{\bm \omega}\cdot{\bm v}}{|{\bm \omega}\wedge {\bm v}|^2}{\bm \omega}\wedge ({\bm \omega}\wedge {\bm v}),
\]
and therefore to the following final form of the vector ${\bm n}$:
\begin{equation}\label{finaln}
{\bm n}=4\frac{u_\mu \nabla_\nu (T_f)^{\mu\nu}}{\mathcal{M}_f}\frac{{\bm \omega}\cdot{\bm v}}{|{\bm \omega}\wedge {\bm v}|^2}{\bm \omega}\wedge ({\bm \omega}\wedge {\bm v}).
\end{equation}

\begin{remark}
\textnormal{For the choice~\eqref{finaln} of the vector ${\bm n}$, the assumption in Lemma~\ref{smart} that the vector field ${\bm n} f_{\sigma}$ should be smooth is satisfied when $f_{\sigma}=O(|{\bm n}\wedge {\bm u}|^2)$ as $|{\bm n}\wedge {\bm u}|\to 0$. }
\end{remark}
\section{The Vlasov system for two species of particles}\label{Vlasovant}
The Vlasov model derived in the previous section violates the conservation law of particles number. In this section we shall restore this fundamental property by postulating the existence in spacetime of a second species of particles interacting with the particles of the first species. We assume that the particles of the new species also have mass $m$ and spin magnitude $\sigma$. 

Let $\overline{f}_\sigma(x,{\bm u}, {\bm \omega})$ be the kinetic density of the new particles; the Vlasov equation on $\overline{f}_\sigma$ is
\begin{subequations}\label{VlasoveqAntiParticles}
\begin{equation}
u^\alpha e_{(\alpha)}^\mu \partial_{x^\mu}\overline{f}_{\sigma}+(\tensor{\gamma}{_\alpha_\beta^i}u^\alpha u^\beta +\overline{a}^i)\partial_{u^i}\overline{f}_{\sigma}+\left(\tensor{\gamma}{_\alpha_k^i}u^\alpha \omega^k+\overline{n}^i\right)\slashed{\partial}_{\omega^i}\overline{f}_{\sigma}=0,
\end{equation}
where
\begin{equation}\label{75b}
\overline{a}^i=\frac{1}{\mathcal{M}_{\overline{f}}}\left(\delta_\mu^i+u^i u_\mu\right)\nabla_\nu (T_{\overline{f}})^{\mu\nu},\quad \overline{n}^i=-4\frac{u_\mu\nabla_\nu (T_{\overline{f}})^{\mu\nu}}{\mathcal{M}_{\overline{f}}}\xi^i.
\end{equation}
\end{subequations}
The evolution equation for the number current of the new particles is therefore
  \begin{equation}\label{Nnotconsanti}
 \nabla_\mu (N_{\overline{f}})^\mu=-\frac{(N_{\overline{f}})_\mu\nabla_\nu (T_{\overline{f}})^{\mu\nu}}{\mathcal{M}_{\overline{f}}}.
 \end{equation}
As pointed out at the end of Section~\ref{matterfieldssec}, when two particle species are present it cannot be assumed that the conservation law of energy-momentum holds in the form~\eqref{consTKinetic} for each single species, since otherwise the constraint~\eqref{consmom2species} would be violated.
To overcome this inconsistency, we postulate that the stress-energy tensors of the two particle species satisfy 
\begin{equation}\label{smart1}
\nabla_b (T_f)^{ab}=\chi(x)(J_{f+\overline{f}})^a,\quad \nabla_b (T_{\overline{f}})^{ab}=\overline{\chi}(x)(J_{f+\overline{f}})^a,
\end{equation}
for some functions $\chi,\overline{\chi}$ such that
\begin{equation}\label{alpha1}
\chi(x)+\overline{\chi}(x)=1.
\end{equation}
Replacing~\eqref{smart1} in~\eqref{Nnotcons} and~\eqref{Nnotconsanti} we find the following equation on the total particles number current of the two species:
\[
 \nabla_\mu [  (N_f)^\mu+(N_{\overline{f}})^\mu]=-\chi(x)\frac{(N_f)_\mu(J_{f+\overline{f}})^\mu}{\mathcal{M}_f}-\overline{\chi}(x)\frac{(N_{\overline{f}})_\mu(J_{f+\overline{f}})^\mu}{\mathcal{M}_{\overline{f}}}.
\]
Hence, we obtain that $ (N_f)^\mu+(N_{\overline{f}})^\mu$ is divergence-free when
\begin{equation}\label{alpha2}
\chi(x)\frac{(N_f)_\mu(J_{f+\overline{f}})^\mu}{\mathcal{M}_f}+\overline{\chi}(x)\frac{(N_{\overline{f}})_\mu(J_{f+\overline{f}})^\mu}{\mathcal{M}_{\overline{f}}}=0.
\end{equation}
Solving the system~\eqref{alpha1}-\eqref{alpha2} and replacing the solution $\chi,\overline{\chi}$ in~\eqref{smart1} we obtain
\begin{subequations}\label{smart2}
\begin{align}
&\nabla_b (T_f)^{ab}=-\frac{(N_{\overline{f}})_\mu(J_{f+\overline{f}})^\mu \mathcal{M}_f}{(J_{f+\overline{f}})^\mu(\mathcal{M}_{\overline{f}}(N_f)_\mu-\mathcal{M}_f(N_{\overline{f}})_\mu)}(J_{f+\overline{f}})^a,\\
& \nabla_b (T_{\overline{f}})^{ab}=\frac{(N_{f})_\mu(J_{f+\overline{f}})^\mu \mathcal{M}_{\overline{f}}}{(J_{f+\overline{f}})^\mu(\mathcal{M}_{\overline{f}}(N_f)_\mu-\mathcal{M}_f(N_{\overline{f}})_\mu)}(J_{f+\overline{f}})^a.
\end{align}
\end{subequations}
Thus, our final system on the kinetic densities $f_\sigma, \overline{f}_\sigma$ for the particles of the two species is~\eqref{VlasoveqParticles}-\eqref{VlasoveqAntiParticles} with~\eqref{smart2} replaced in the definitions of the vectors ${\bm a}, \overline{\bm a}, {\bm n}, \overline{\bm n}$. This system couples the dynamics of the kinetic densities $f_\sigma,\overline{f}_\sigma$ and is consistent with the conservation law of energy-momentum~\eqref{consmom2species}. The total number of particles of each individual species is not conserved, but their sum is. 

\section{Conclusions}\label{conclu}
In this paper we have introduced a new general relativistic kinetic model for the dynamics of spin neutral particles with positive mass in a spacetime with torsion. The main assumptions of the model are
\begin{itemize}
\item[(i)] the four-velocity $u$ and four-spin $s$ of the particles are constrained by $u^\mu s_\mu =0$ and $s^\mu s_\mu =\sigma^2$ for a positive constant $\sigma\neq 1$; 
\item[(ii)] the particles do not collide; 
\item[(iii)] the particles spin induces a torsion in spacetime that obeys Eintein-Cartan's theory.  
\end{itemize}
We have derived the most general transport equation on the kinetic particle density $f(x,u,s)$ that is consistent with assumptions (i)-(ii); see~\eqref{VlasovZ}. This equation is defined up to the choice of two arbitrary vectors, which can be chosen so that the model is compatible with the Bianchi identity (i.e., the conservation law of energy-momentum) in Einstein-Cartan's theory. The total number of particles is not conserved by the single species particle model, which led us to assume the existence in spacetime of an additional species of particles with the same mass $m$ and spin magnitude $\sigma$. The evolution equation for the kinetic density $\overline{f}(x,u,s)$  of these new particles has been obtained by imposing that the total particles number current computed with the kinetic density $f+\overline{f}$ should be divergence-free. 

According to Equations~\eqref{VlasoveqParticles}-\eqref{VlasoveqAntiParticles}, the particles motion is not geodesic. In fact, along their trajectory $x=x(\tau)$, the momentum ${\bm p}(\tau)=m{\bm u}(\tau)$ of the two particle species satisfies, respectively,
\[
\frac{dp^i}{d\tau}=m(\tensor{\gamma}{_\alpha_\beta^i}u^\alpha u^\beta +a^i),
\quad
\frac{dp^i}{d\tau}=m(\tensor{\gamma}{_\alpha_\beta^i}u^\alpha u^\beta +\overline{a}^i),
\]
where $a^i, \overline{a}^i$ are given by~\eqref{72b},~\eqref{75b}. Thus, the deviation from geodesics motion is different for the particles of the two species: in one case it is determined by the force $m{\bm a}$, in the second case by the force $m\overline{{\bm a}}$. Both these forces are induced by the spacetime torsion. 
Similarly, the direction ${\bm \omega}$ of the rest frame spin vector for the two particle species obeys, respectively,
\[
\frac{d\omega^i}{d\tau}=\tensor{\gamma}{_\alpha_k^i}u^\alpha \omega^k+n^i,\quad \frac{d\omega^i}{d\tau}=\tensor{\gamma}{_\alpha_k^i}u^\alpha \omega^k+\overline{n}^i,
\]
where $n^i,\overline{n}^i$ are given in~\eqref{72b},~\eqref{75b}.
The first term in the right hand side of the previous equations is the change of ${\bm \omega}$ due to the rotation of the frame $e_{(\mu)}$, while the terms $n^i,\overline{n}^i$ represent the rotation of ${\bm \omega}$ induced by the spacetime torsion. Again, this effect is different for the two particle species.

In the typical applications of the Vlasov equation~\eqref{XVlasov} for spinless particles, $f_*$ stands for the kinetic density of galaxies, or even clusters of galaxies~\cite{BT}. In our model, the kinetic densities $f_\sigma$, $\overline{f}_\sigma$ are to be interpreted as the densities in state space of elementary particles. In this respect, the problem to which our model could be applicable is the study of the early universe dynamics.  
   
Since we assume that particles are neutral and have positive mass, then our model applies, for instance, to neutrinos. Within this application it is tempting to identify the two particle species with neutrinos/antineutrinos pairs,  but this interpretation is not without issues. In fact, according to the standard model neutrinos and antineutrinos are distinguished based on their weak isospin, and thus on their weak interaction, while in our model they differ by how torsion acts on them.
Moreover, the difference between the action of gravity on neutrinos and antineutrinos predicted by our model has never been observed.

Finally, we remark that the results presented in this paper are entirely different from those in~\cite{Galia}, which pertain to exact solutions of the standard Vlasov equation for particles without spin on a background symmetric manifold with a given simple torsion.

{\bf Acknowledgments.} I am grateful to H\aa kan Andr\'easson for his comments on this article.
\appendix 
\setcounter{secnumdepth}{0}
\section{Appendix}
\subsection*{Proof of Lemma~\ref{long}}
We have
\begin{align}
&\partial_{u^i}\psi_*=\frac{u_i}{u^0}\,(\partial_{u^0}\psi)_* +(\partial_{u^i}\psi)_* +\sigma \omega_i(\partial_{s^0}\psi)_* +\frac{\sigma \omega_iu^j}{1+u^0}(\partial_{s^j}\psi)_*+\frac{\sigma{\bm \omega}\cdot{\bm u}}{1+u^0}(\sqrt{\mathfrak{h}})_i^{j}(\partial_{s^j}\psi)_*,\label{prima}\\
&\partial_{\omega^i}\psi_*=\sigma u_i(\partial_{s^0}\psi)_* +\sigma [(\sqrt{\mathfrak{h}})^{-1}]_{i}^{j}(\partial_{s^j}\psi)_* .\label{seconda}
\end{align}
Inverting~\eqref{seconda} we obtain
\begin{equation}\label{tempo}
(\partial_{s^j}\psi)_*=-\frac{u_j}{u^0}(\partial_{s^0}\psi)_*+\frac{1}{\sigma}(\sqrt{\mathfrak{h}})^{k}_{j}\partial_{\omega^k}\psi_*,
\end{equation}
where we used that $(\sqrt{\mathfrak{h}})^{i}_{j}u_i=u_j/u^0$.
By~\eqref{tempo} we have
\begin{equation}\label{pmusu}
u^j(\partial_{s^j}\psi)_*=-\frac{|{\bm u}|^2}{u^0}(\partial_{s^0}\psi)_*+\frac{u^k}{\sigma u^0}\partial_{\omega^k}\psi_*.
\end{equation} 
Using~\eqref{tempo} and~\eqref{pmusu} in~\eqref{prima} we find, after some simplifications,
\begin{align}
\partial_{u^i}\psi_*&=\frac{u_i}{u^0}(\partial_{u^0}\psi)_*+(\partial_{u^i}\psi)_*+\frac{s^0}{1+u^0}(\partial_{s^i}\psi)_*+\frac{s_i}{u^0}(\partial_{s^0}\psi)_*-\frac{s^0u_i}{(u^0)^2(1+u^0)}(\partial_{s^0}\psi)_*\nonumber\\
&\quad +\frac{(\sqrt{\mathfrak{h}})^k_i \omega_k}{u^0(1+u^0)}u^j\partial_{\omega^j}\psi_*.\label{dpig2}
\end{align}
Hence, using $A^\mu u_\mu =A^i u_i-A^0u^0=0$, we obtain
\begin{align}
A^i\partial_{u^i}\psi_*&=A^\mu(\partial_{u^\mu}\psi)_*+\frac{A^\mu s_\mu}{u^0}(\partial_{s^0}\psi)_*+\frac{s^0}{1+u^0}A^\mu(\partial_{s^\mu}\psi)_*\nonumber\\
&\quad+\frac{(\sqrt{\mathfrak{h}})_{ik}A^i \omega^k}{u^0(1+u^0)}u^j\partial_{\omega^j}\psi_*.\label{almostthere}
\end{align}
Furthermore, again by~\eqref{tempo},
\begin{subequations}\label{uau}
\begin{align}
&A^\mu(\partial_{s^\mu}\psi)_*=\frac{1}{\sigma}(\sqrt{\mathfrak{h}})_i^j A^i\partial_{\omega^j}\psi_*,\\
& B^\mu(\partial_{s^\mu}\psi)_*=\frac{A^\mu s_\mu}{u^0}(\partial_{s^0}\psi)_*+\frac{1}{\sigma}(\sqrt{\mathfrak{h}})_i^j B^i\partial_{\omega^j}\psi_*,
\end{align}
\end{subequations}
where for the second identity we used $A^\mu s_\mu+B^\mu u_\mu=0$. 
Combining~\eqref{almostthere} and~\eqref{uau} we find
\begin{align}
A^\mu(\partial_{u^\mu}\psi)_*+B^\mu(\partial_{s^\mu}\psi)_*=&A^i\partial_{u^i}\psi_*+\frac{(\sqrt{\mathfrak{h}})_i^j}{\sigma}\left(B^i-\frac{s^0}{1+u^0}A^i\right)\partial_{\omega^j}\psi_*\nonumber\\
&-\frac{(\sqrt{\mathfrak{h}})_{ik}A^i\omega^k}{u^0(1+u^0)}u^j\partial_{\omega^j}\psi_*.\label{temporalicchio}
\end{align}
Now, in the right hand side of~\eqref{temporalicchio} we replace $\partial_{\omega^j}=\slashed{\partial}_{\omega^j}+\omega_j\omega^k\partial_{\omega^k}$ to get
\begin{align*}
RHS~\eqref{temporalicchio}=&A^i\partial_{u^i}\psi_*+\frac{(\sqrt{\mathfrak{h}})_i^j}{\sigma}\left(B^i-\frac{s^0}{1+u^0}A^i\right)\slashed{\partial}_{\omega^j}\psi_*-\frac{(\sqrt{\mathfrak{h}})_{ik}A^i\omega^k}{u^0(1+u^0)}u^j\slashed{\partial}_{\omega^j}\psi_*\\
&+\frac{(\sqrt{\mathfrak{h}})_i^j}{u^0\sigma}\left(u^0B^i-s^0A^i\right)\omega_j\omega^k\partial_{\omega^k}\psi_*.
\end{align*}
Since
\[
(\sqrt{\mathfrak{h}})_i^j(u^0B^i-s^0A^i)\omega_j=-\frac{s^0}{\sigma}\,(A^\mu v_\mu+B^\mu u_\mu)=0,
\]
the proof is complete.

\subsection*{Proof of Lemma~\ref{smart}}
We have
\[
e_{(\nu)}^\beta\partial_{x^\beta}\Psi^\nu=\int_{\R^3\times S^2}f_{\sigma}\,u^\nu e_{(\nu)}^\beta \partial_{x^\beta}\psi_*\frac{d{\bm u}\,d{\bm \omega}}{u^0}+\int_{\R^3\times S^2}\psi_* u^\nu e_{(\nu)}^\beta\partial_{x^\beta}f_{\sigma}\,\frac{d{\bm u}\,d{\bm \omega}}{u^0}=I+II.
\]
By~\eqref{VlasovZ} the second integral is
\begin{align*}
II&=-\int_{\R^3\times S^2}\psi_*\left((\tensor{\gamma}{_\alpha_\beta^i}u^\alpha u^\beta+a^i) \partial_{u^i}f_{\sigma}+\left(\tensor{\gamma}{_\alpha_k^i}u^\alpha \omega^k+n^i\right)\slashed{\partial}_{\omega^i}f_{\sigma}\right)\frac{d{\bm u}\,d{\bm \omega}}{u^0}\\
&=\int_{\R^3\times S^2}((\tensor{\gamma}{_\alpha_\beta^i}u^\alpha u^\beta+a^i)\partial_{u^i}\psi_*+\left(\tensor{\gamma}{_\alpha_k^i}u^\alpha \omega^k+n^i\right)\slashed{\partial}_{\omega^i}\psi_*+\tensor{\gamma}{^\beta_\alpha_\beta}u^\alpha \psi_*)\,f_{\sigma}\,\frac{d{\bm u}\,d{\bm \omega}}{u^0}\\
&\quad\quad+ \int_{\R^3\times S^2}\psi_*(\slashed{\partial}_{\omega^i}n^i+\partial_{u^i}a^i-a^iu_i/(u^0)^2)\,f_{\sigma}\,\frac{d{\bm u}\,d{\bm \omega}}{u^0},
\end{align*}
where we integrated by parts in the variables ${\bm u}, {\bm \omega}$ and used that
\[
\tensor{\gamma}{_\alpha_\beta^i}\partial_{u^i}\left(\frac{u^\alpha u^\beta}{u^0}\right)=\tensor{\gamma}{^\beta_\alpha_\beta}\frac{u^\alpha}{u^0},\quad \slashed{\partial}_{\omega^i}(\tensor{\gamma}{_\alpha_k^i}u^\alpha \omega^k)=0.
\] 
(We also make the tacit assumption that $f_{\sigma}$ decays to zero as $|{\bm u}|\to \infty$ sufficiently fast so that the boundary terms arising from the integration by parts in the ${\bm u}$ variable vanish.)
The proof now follows from~\eqref{chain2} with $y^i=n^i$.


\begin{thebibliography}{99}
\bibitem{Hakan} H.~Andr\'easson: The Einstein-Vlasov system/Kinetic Theory. {\it Living Rev. Relativ.} {\bf 14}, 4 (2011)
\bibitem{BT} J.~Binney, S.~Tremaine: {\it Galactic Dynamics: Second Edition}. Princeton Series in Astrophysics (2008)
\bibitem{Cartan1} \'E.~Cartan: Sur une g\'en\'eralisation de la notion de courbure de Riemann et les espaces \`a torsion. C.~R.~Acad.~Sci. (Paris) {\bf 174}, 593 (1922)
\bibitem{Cartan2} \'E.~Cartan: Sur les vari\'et\'es \`a connexion affine et la th\'eorie de la relativit\'ee g\'en\'eralis\'ee. Part I: Ann. \'Ec. Norm. {\bf 40}, 345--142 (1923) and {\bf 41}, 1--25 (1924). Part II: Ann. \'Ec. Norm. {\bf 42}, 17--88 (1925)
\bibitem{DR} R.~D'Auria, L.~Ravera: Conformal gravity with totally antisymmetric torsion. 	Phys. Rev. D {\bf 104}, 084034 (2021)
\bibitem{Ehlers} J.~Ehlers: Survey of general relativity theory. In: W. Israel (ed.) {\it Relativity, Astrophysics and Cosmology}, 1--125. Astrophysics and Space Science Library {\bf 38}. Springer, Dordrecht (1973)
\bibitem{Fabbri} L.~Fabbri: On a Completely Antisymmetric Cartan Torsion Tensor. Annales Fond. Broglie {\bf 32}, 215--228 (2007) 
\bibitem{Frenkel} J.~Frenkel: Die Elektrodynamik des rotierenden Elektrons. Z. Physik {\bf 37}, 243--262 (1926)
\bibitem{Galia} A.~M.~Galiakhmetov: Some exact solutions of the self-consistent Einstein-Cartan-Vlasov equations. Soviet Physics Journal {\bf 28}, 777--780 (1985)
\bibitem{Hammond} R.~T.~Hammond: Torsion gravity. Rep. Prog. Phys. {\bf 65}, 599--649 (2002)
\bibitem{HHK} F.~W.~Hehl, P.~von der Heyde, G.~D.~Kerlich: General relativity with spin and torsion: Foundations and prospects. Rev. Mod. Phys. {\bf 48}, 393--416 (1976)
\bibitem{Kibble} T.~W.~B.~Kibble: Lorentz invariance and the gravitational field. J.~Math.~Phys. {\bf 2}, 212--221 (1961)
\bibitem{Medina} S.~B.~Medina, M.~Nowakowski, D.~Batic: Einstein-Cartan cosmologies. Ann. Phys. {\bf 400}, 64--108 (2019) 
\bibitem{EVM} T.~R.~Moutngui Sée, P.~Noundjeu: Global Existence of Solutions to the Spherically Symmetric Einstein-Vlasov-Maxwell System. Acta Appl. Math. {\bf 189}, 4 (2024)
\bibitem{Nyborg} P.~Nyborg: On Classical Theories of Spinning Particles. Il Nuovo Cimento {\bf 23}, 47--62 (1962)
\bibitem{Pauli} W.~Pauli: Über den Zusammenhang des Abschlusses der Elektronengruppen im Atom mit der Komplexstruktur der Spektren. Z. Physik {\bf 31}, 765--783 (1925)
\bibitem{Penrose} R.~Penrose, W.~Rindler: {\it Spinors and space-time, Vol. I}. Cambridge University Press (1984)
\bibitem{Alan} A.~D.~Rendall: {\it Partial Differential Equations in general relativity}. Oxford Graduate Texts in Mathematics (2008)
\bibitem{Schouten} J.~A.~Schouten: {\it Ricci Calculus}. Springer, Berlin (1954)
\bibitem{Sciama} D.~W.~Sciama: On the analogy between charge and spin in general relativity. In: {\it Recent Developments in general relativity}, 415--439. Oxford: Pergamon Press and Warzsawa (1962)
\bibitem{Thomas} L. ~H.~Thomas: The kinematics of an electron with an axis. Phil. Mag. {\bf 7}, 1--23 (1927)
\bibitem{Trautman} 	A.~Trautman: Einstein-Cartan theory. In {\it Encyclopedia of Mathematical Physics}, vol. 2, 189--195. Edited by J.-P. Francoise, G.L. Naber, Tsou S.T. Oxford: Elsevier (2006)
\bibitem{GU} G.~E.~Uhlenbeck, S.~Goudsmit: Ersetzung der Hypothese vom unmechanischen Zwang durch eine Forderung bezüglich des inneren Verhaltens jedes einzelnen Elektrons. Naturwissenschaften {\bf 13}, 953--954 (1925)
\bibitem{Vlasov}  A. A. Vlasov: On Vibration Properties of Electron Gas. J. Exp. Theor. Phys. (in Russian) {\bf 8}, 291 (1938)
\bibitem{Wald} R.~M.~Wald: {\it General Relativity}. The University of Chicago Press (1984)
\end{thebibliography}
\end{document}